\documentclass[aps,pra,reprint,groupedaddress]{revtex4-1}
\usepackage{dcolumn}
\usepackage{bm}
\usepackage[dvips]{graphicx}
\usepackage{amsthm}
\usepackage[cmex10]{amsmath}
\usepackage{amssymb}



\newcommand{\cpq}{C_{\psi\phi}}

%
\newcommand{\ket}[1]{\left| #1\right>} 
\newcommand{\hH}{\mathcal H}

%
\newcommand{\tr}{{\rm Tr}} 
\newcommand{\od}[2]{\frac{d #1}{d #2}} 
\newcommand{\real}{\mathbb R} 
\newcommand{\natn}{\mathbb N} 
\newcommand{\zah}{\mathbb Z} 
\newcommand{\comp}{\mathbb C} 

\interdisplaylinepenalty=2500 

\newtheorem{Thm}{Theorem}
\newtheorem{Lem}{Lemma}

\newtheorem{Prop}{Proposition}

\begin{document}


\title{Asymptotic Compatibility between LOCC Conversion and Recovery}


\author{Kosuke Ito$^{1}$~~~Wataru Kumagai$^{1,2}$~~~Masahito  Hayashi$^{1,3}$\\
\textit{${}^1$Graduate School of Mathematics, Nagoya University, Japan, \\${}^2$
Faculty of Engineering, Kanagawa University, Japan, \\${}^3$
Centre for Quantum Technologies, National University of Singapore, Singapore}}



\begin{abstract}
Recently, entanglement concentration was explicitly shown to be irreversible.
However, it is still not clear what
kind of states can be reversibly converted in the asymptotic setting by LOCC
when neither the initial nor the target state is maximally entangled.
We derive the necessary and sufficient condition for the reversibility of 
LOCC conversions between two bipartite pure entangled states 
 in the asymptotic setting.
In addition, we show that conversion can be achieved perfectly with only local unitary operation under such condition except for special cases.
Interestingly, our result implies that an error-free reversible conversion is asymptotically possible even
between states whose copies can never be locally unitarily equivalent with any finite numbers of copies, although such a conversion is impossible
in the finite setting.
In fact, we show such an example.
Moreover, we establish how to overcome the irreversibility of LOCC conversion in two ways.
As for the first method, we evaluate
how many copies of the initial state is to be lost 
to overcome the irreversibility of LOCC conversion.
The second method is to add a supplementary state appropriately, which also works for LU conversion unlike the first method.
Especially, for the qubit system, any non-maximally pure entangled state can be a universal resource for the asymptotic reversibility when copies of the state is sufficiently many.
More interestingly, our analysis implies that far-from-maximally entangled states can be better than nearly maximally entangled states as this type of resource. This fact brings new insight to the resource theory of state conversion.

\end{abstract}

\pacs{03.65.Aa, 03.67.Bg}

\maketitle



\section{Introduction}
Entangled states are used as resources for many quantum information processes.
When the initial entangled state is different from the desired form and we are not allowed to apply the global operation,
we need to convert the given initial state by local operations and classical communications (LOCC).
This type of conversion is called LOCC conversion.
Many conventional researches deal with LOCC conversions whose target states are maximally entangled states.
However, the most preferable entangled state depends on the type of the information processes to be applied.
For example, measurement based quantum computation \cite{gross09:_most_quant_states_are_too} and quantum channel estimation \cite{hayashi11:_compar_cramer_rao}
require entangled states that are not necessarily maximally entangled
while maximally entangled states are used as typical resource of entanglement.
Moreover, Ishizaka and Hiroshima \cite{ishizaka09:_quant} revealed that
the optimal shared entangled state is not necessarily a maximally entangled state in their port-based teleportation.
In such a situation, it is required to consider case where both the initial and the target states are not necessarily maximally entangled.

Bennett et. al. \cite{bennett96:_concen} studied the asymptotic conversion between the multiple-copy states of two distinct pure entangled states, which are not necessarily maximally entangled.
The optimal conversion rates are given by the ratio between von Neumann entropies $S_\psi$ and $S_\phi$ of the reduced density matrices of the initial state $\psi$ and the target state $\phi$.
Since the opposite conversion rate is the inverse of the original conversion rate, this kind of conversion was seemed to be reversible, as pointed out in \cite{jonathan99:_minim_condit_local_pure_state_entan_manip}\cite{vidal01:_irrev_asymp_manip_entan}\cite{yang05:_irrev_all_bound_entan_states}\cite{acin03}\cite{bennett00:_exact}.
However, two of the authors \cite{kumagai13:_entan_concen_irrev} explicitly revealed that this kind of conversion is irreversible
in the case of entanglement concentration, i.e., 
the case when the target entangled state is maximally entangled,
although Hayden and Winter \cite{hayden03:_commun} and Harrow and Lo \cite{harrow04:_tight_lower_bound_class_commun} implicitly suggested this fact.
This problem has not been discussed when 
the initial and target states are not maximally entangled.
Recently, two
of the authors \cite{w.13:_asymp_class_locc_conver_its} investigated the second order asymptotics
and derived the second-order optimal LOCC conversion
rate between general pure states, which clarifies
the relation between the accuracy and the asymptotically
optimal conversion rate up to the second order.
However, the paper \cite{w.13:_asymp_class_locc_conver_its} did not consider the reversibility.
That is, it is still unsolved what
kind of states can be reversibly converted in the asymptotic setting by LOCC
when neither the initial nor the target state is maximally entangled.
To clarify the reversibility, this paper studies the compatibility between the LOCC conversion and the recovery operation when the initial and the target states are given as respective number of copies of
an arbitrary pure entangled state $\psi$ and another arbitrary 
pure entangled state $\phi$ on bipartite system $\hH_A\otimes\hH_B$.

Next, we consider to restrict our operations to local unitary operations (LU conversion), which are contained in LOCC conversions. This analysis is not only an examination whether or not the equivalence between LU convertibility and LOCC reversibility is true even for the asymptotic setting, but also useful when the reversibility is required primally even for the finite setting. We derive the formula of the error of LU conversion except for the lattice cases, with a new method of asymptotic analysis of probability distributions. We also give numerical calculations of the error of LU conversion for some examples of pure entangled states to compare with the asymptotic error. Especially, we provide an example of LU convertible pair of states by numerical calculation, and examine how its error converges to nearly $0$.

Further, when the the asymptotic reversibility of LOCC conversions is impossible,
we also consider how to realize the asymptotic reversibility by modifying this conversion 
in two different ways.
As the first method, we allow to decrease the required number of copies in the recovery process.
In this setting, given $n$ copies of the pure entangled state $\psi$ initially,
we find that when the decreasing number is in the order $n^\gamma$ with $\gamma >1/2$,
both of conversion and recovery can be simultaneously realized in the asymptotic setting.
As the second method,
we consider to add a supplemental resource to satisfy the 
necessary and sufficient condition for the asymptotic reversibility.
Then, we find that 
a state which is far from the maximally entangled state is more useful as a supplemental resource.

The paper is organized as follows.
At first, we show the necessary and sufficient condition for the asymptotic reversibility of LOCC conversions between two bipartite pure entangled states
when the number of copies to be recovered is restricted to be the same as the initial number in Sec. \ref{noloss-sec0}.
Next, we restrict our operations to
local unitary operations (LU conversion) in Sec. \ref{LU-main},
which are contained in LOCC conversions.
This analysis is useful when the reversibility is required primally.
We give numerical calculations of the error of LU conversion for some examples of pure entangled states.
Especially, we provide an example of LU convertible pair of states by numerical calculation in Sec. \ref{LU-example}.
Further, we consider how to overcome the asymptotic irreversibility of LOCC conversions in two ways in Sec. \ref{overcome_mother}.
Firstly, we relax the constraint so that the number of copies to be recovered can be smaller than the initial number in Sec. \ref{someloss-sec}.
We show the tight lower bound for the order of the difference between the initial and recovered number needed to overcome the irreversibility of LOCC conversion.
Secondly, we consider how to realize the asymptotic reversibility by adding a supplemental resource
when the asymptotic reversibility of LOCC conversions is impossible in Sec. \ref{supplement_method}.
That is, we clarify how many sacrificed copies are required for asymptotically error-free recovery.
Finally, the summary of our work and future perspectives are provided in Sec. \ref{conclusion-sec}.
All of theorems given in the main body will be essentially shown in Appendices.
\section{Compatibility between Conversion and Recovery without Any Loss}\label{noloss-sec0}
\begin{figure}[!t]
\centering
\includegraphics[clip ,width=2.8in]{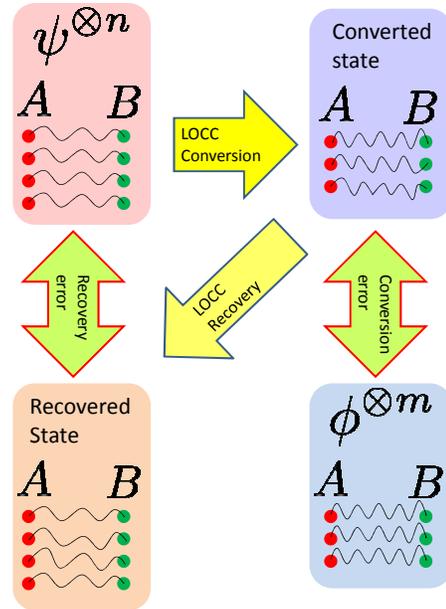}
\caption{The schematic diagram of LOCC conversion and its recovery operation. The initial state which consists of $n$ copies of a pure entangled state $\psi$ is converted to the target state which consists of $m$ copies of another pure entangled state $\phi$.}
\label{figure1}
\end{figure}
We consider the LOCC conversion and the recovery to the same number of copies of the initial state as the initial number,
as described in Fig. \ref{figure1}.
To investigate the compatibility between LOCC conversion and recovery,
we consider the minimum conversion-recovery error (MCRE) defined as
\begin{align}
 &\delta_n(\psi,\phi) :=\nonumber\\
&\min_{m\in\natn , C,D:\text{LOCC}} B(C(\psi^{\otimes n}),\phi^{\otimes m})+B(\psi^{\otimes n},D\circ C(\psi^{\otimes n})),
\end{align}
where $B$ is the Bures distance defined as $B(\psi, \phi) = \sqrt{1 - F(\psi,\phi)}$, $F$ denotes the fidelity, $C$ and $D$ are LOCC conversion and recovery operations, respectively.
The limit $\lim_{n\to \infty}\delta_n(\psi,\phi)$ represents the asymptotic compatibility between the two operations because its convergence to zero means that both operations can be perfectly accurately done in the asymptotic setting.
On the other hand, when it does not go to zero, we have to consider a trade-off between
the convertibility and the reversibility 
even in the asymptotic setting.

To investigate the asymptotic behavior of $\delta_n(\psi,\phi)$, we focus on the asymptotic expansion of the number $m_n=S_\psi/S_\phi n + b n^{\gamma} + o(n^{\gamma})$ of copies of the target state $\phi$ when the initial state is $\psi^{\otimes n}$.
First of all, when $b<0$ and $\gamma>1/2$, the conversion is possible since the target number of copies is small, however the recovery is no longer impossible because of fatal lost of number of copies.
Oppositely, when $b>0$ and $\gamma>1/2$, the conversion itself is impossible since the target number of copies is too large.
The remaining case where $\gamma = 1/2$ requires more delicate analysis, so-called the second order analysis.
In the second order analysis of LOCC conversion, it was shown that 
the following quantity plays an important role 
\cite{w.13:_asymp_class_locc_conver_its}:
\begin{equation}
 C_{\psi, \phi} := 
\frac{S_{\psi}}{V_\psi}\left(\frac{S_{\phi}}{V_\phi}\right)^{-1},
\end{equation}
where 
$V_\psi := \tr\{(\tr_B\psi)(-\log(\tr_B\psi)-S_{\psi} )^2\}$.
In the following,
we call $C_{\psi\phi}$ the conversion characteristics.
Our result is that the conversion characteristics completely characterizes the asymptotic compatibility between LOCC conversion and recovery for pure bipartite entangled states as follows.
\begin{Thm}\label{noloss}
 Let $\hH_A$ and $\hH_B$ be finite-dimensional systems and $\psi$ and $\phi$ be arbitrary pure states on $\hH_A\otimes\hH_B$. Then
\begin{equation}
 \lim_{n\rightarrow\infty}\delta_n(\psi,\phi)=0,
\end{equation}
if and only if $\cpq = 1$.
\end{Thm}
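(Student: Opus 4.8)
The plan is to reduce the two-sided problem to two one-way approximate LOCC conversions and then analyse those through the Schmidt spectra. Writing $\epsilon_1 = B(C(\psi^{\otimes n}),\phi^{\otimes m})$ and $\epsilon_2 = B(\psi^{\otimes n}, D\circ C(\psi^{\otimes n}))$, I would exploit that the Bures distance is a metric (triangle inequality) and is monotone non-increasing under the LOCC channel $D$. Together these give $B(D(\phi^{\otimes m}),\psi^{\otimes n}) \le B(D(\phi^{\otimes m}), D\circ C(\psi^{\otimes n})) + B(D\circ C(\psi^{\otimes n}),\psi^{\otimes n}) \le \epsilon_1+\epsilon_2$, so a small $\delta_n$ simultaneously forces the optimal LOCC error of the forward conversion $\psi^{\otimes n}\to\phi^{\otimes m}$ and of the backward conversion $\phi^{\otimes m}\to\psi^{\otimes n}$ to be at most $\delta_n$. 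The same two inequalities run in reverse show that if both one-way conversions can be made asymptotically error-free for a common $m=m_n$, then $\delta_n\to0$. Thus the theorem reduces to the statement that both one-way optimal conversion errors vanish for some choice of $m_n$ if and only if $\cpq=1$.

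Next I would pass to Schmidt coefficients. By Nielsen's theorem and its approximate (fidelity) refinement, each one-way LOCC conversion between the pure states $\psi^{\otimes n}$ and $\phi^{\otimes m}$ is governed by the majorization relation between the sorted eigenvalue distributions $\pp^{\otimes n}$ and $\pq^{\otimes m}$ of the reduced states, where $\pp$ and $\pq$ are the squared-Schmidt-coefficient distributions of $\psi$ and $\phi$. Since these are $n$-fold (resp. $m$-fold) products, the self-weighted distribution of $-\log$ of an eigenvalue obeys the central limit theorem: after centering at $nS_\psi$ and rescaling by $\sqrt n$, the profile of $\pp^{\otimes n}$ converges to the Gaussian $N(0,V_\psi)$. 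First-order entanglement monotonicity forces $mS_\phi = nS_\psi + o(\sqrt n)$ whenever either error vanishes, so I would parametrize $m = (S_\psi/S_\phi)n + b\sqrt n + o(\sqrt n)$; then the profile of $\pq^{\otimes m}$ converges to $N(bS_\phi, (S_\psi/S_\phi)V_\phi)$. The second-order conversion analysis of \cite{w.13:_asymp_class_locc_conver_its} then expresses each limiting optimal conversion fidelity as an explicit functional (a Gaussian majorization/overlap functional) of these two limiting profiles.

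For sufficiency, assume $\cpq=1$, i.e. $S_\psi/V_\psi = S_\phi/V_\phi$, which is exactly $V_\psi=(S_\psi/S_\phi)V_\phi$. Choosing $b=0$ (e.g. $m=\lceil(S_\psi/S_\phi)n\rceil$) makes the two limiting Gaussian profiles coincide in both mean and variance, so both one-way conversions become asymptotically perfect; taking $C$ the optimal forward conversion and $D$ an optimal recovery, the triangle/monotonicity bound of the first paragraph yields $\delta_n\to0$. For necessity I would argue contrapositively: if $\cpq\ne1$ then $V_\psi\ne(S_\psi/S_\phi)V_\phi$, so for every choice of $b$ the two limiting Gaussians differ in variance and hence have distinct sorted-eigenvalue (Lorenz) profiles; the majorization needed for the forward direction and the one needed for the backward direction cannot then both hold, so at least one of the two one-way optimal errors stays bounded away from $0$ and $\liminf_n\delta_n>0$.

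The main obstacle is this last step made quantitative and uniform: I must convert ``the limiting Gaussians differ in variance'' into a strictly positive lower bound on at least one conversion error, holding uniformly in $m$ and over all LOCC $C,D$. This is precisely the hard content of the second-order converse of \cite{w.13:_asymp_class_locc_conver_its}, which I would combine with a Berry--Esseen-type control of the CLT approximation to the sorted spectra. I would also treat separately the degenerate maximally entangled cases where $V_\psi$ or $V_\phi$ vanishes and the limiting profile collapses to a point mass: there $\cpq$ is $0$ or $\infty$, consistent with the established irreversibility of entanglement concentration \cite{kumagai13:_entan_concen_irrev}, whereas two maximally entangled states are trivially reversibly interconvertible.
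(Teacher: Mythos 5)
Your proposal is correct and follows essentially the same route as the paper: lower-bound $\delta_n$ by the worse of the two one-way LOCC errors via the triangle inequality and monotonicity of the Bures distance, reduce to the squared-Schmidt-coefficient distributions through Nielsen/Vidal, and settle both directions with the second-order (Rayleigh--Normal) asymptotics of the cited conversion paper, taking $b=0$ for achievability when $\cpq=1$. The one caveat is your claim that first-order monotonicity forces $m_nS_\phi=nS_\psi+o(\sqrt n)$: it only gives $o(n)$, so the intermediate window $m_n=(S_\psi/S_\phi)n+bn^{\gamma}$ with $1/2<\gamma<1$ must also be excluded by the second-order results (one of the forward or backward fidelities tends to $0$ there, exactly as the paper argues), which is precisely the ambiguity in the conventional ``$o(n)$'' that the theorem is designed to resolve.
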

The detail of the proof of Theorem \ref{noloss} is given in Appendix \ref{sa2}.
Now, one may feel that our result is strange in comparison to the
result by Bennett et. al. \cite{bennett96:_concen}, which states that
\begin{align}
\lim_{n\rightarrow\infty}\sup\left\{\frac{m_n}{n}|\min_{C:\text{LOCC}} B(C(\psi^{\otimes n}),\phi^{\otimes m_n})<\epsilon\right\}=\frac{S_\psi}{S_\phi}\label{benn}
\end{align}
holds for arbitrary $\epsilon >0$.
Because of (\ref{benn}), LOCC conversion is conventionally considered to be asymptotically reversible in the sense that the conversion $\psi^{\otimes n}\rightarrow \phi^{\otimes (S_\psi/S_\phi n+o(n))}\rightarrow \psi^{\otimes (n(=(S_\phi/S_\psi)(S_\psi/S_\phi)n)+o(n))}$ is possible asymptotically.
However, $o(n)$ is quite ambiguous in this context.
In fact, the conversion $\psi^{\otimes n}\rightarrow \phi^{\otimes S_\psi/S_\phi n+b\sqrt{n}+o(\sqrt{n})}$ is impossible if $\cpq\neq 1$ as we consider in the proof of Theorem \ref{noloss}.
That is because (\ref{benn}) does NOT mean that any number of the form $m_n= S_\psi/S_\phi n +o(n)$ can always be achievable, although it implies that the maximally achievable number $m_n$ of the copies have the form $m_n= S_\psi/S_\phi n +o(n)$.
In Sec. \ref{someloss-sec},
we deal with the case when we permit some loss in the number of copies to be recovered.
In this way, we correct the conventional naive sense of $o(n)$, simultaneously giving the first method to overcome the irreversibility, i.e., to allow to decrease the required number of copies in the recovery process.

e reversibility
As stated in Theorem \ref{noloss}, if the given initial state $\psi$ and the desired state $\phi$ have $\cpq\neq 1$,
there is no reversible LOCC conversion between them even for asymptotic sense.
Fortunately, however, such a conversion becomes possible by adding an appropriate state $\psi'$ to adjust $C_{\psi\otimes\psi',\phi}$ to $1$ as we examine in detail in Sec. \ref{supplement_method}, which is the second method to overcome the irreversibility.
In this sense, Theorem \ref{noloss} itself is useful to realize reversible conversion for desired pair of states.

\section{Conversion with Only Local Unitary Operation}
\subsection{The Asymptotic Formula of the Error under LU conversion}\label{LU-main}
Even if LOCC conversion and recovery are asymptotically compatible, it is not necessarily the case for finite number of copies in general.
If our operation is restricted to local unitary operations, its reversibility is perfectly guaranteed even for the non-asymptotic setting.
In some applications, the reversibility is more important
than the quality of conversion.
For example, consider the following protocol.
At first, we encode message to shared entangled state.
Then, we ``hide'' the existence of entanglement behind the thermal noise 
by converting the entanglement half to a state similar to the environment 
so that eavesdropper can not notice the existence.
This conversion must be reversible for decoding.
To guarantee the perfect recoverability of the message,
we need to restrict our operations to LU conversions.
We define the error $\epsilon_n(\psi, \phi)$ of LU conversion from $\psi$ to $\phi$ as
\begin{align}
& \epsilon_n(\psi, \phi):=\nonumber\\
 & \min
\left\{
B((U_A\otimes U_B)\psi^{\otimes n}, \phi^{\otimes m})
\left|
\begin{array}{l}
\displaystyle
 m\in \natn\\
 U_A : \hH_A^{\otimes n}\rightarrow\hH_A'^{\otimes m}\\
 U_B : \hH_B^{\otimes n}\rightarrow\hH_B'^{\otimes m}\\
 U_A, U_B :\text{unitary}
\end{array}
\right.
\right\}.
\end{align}
This definition of the error measures the optimum performance under the optimization of the local unitary operation $U_A\otimes U_B$ and the number $m$ of copies of the target state $\phi$.

Similarly to the previous section, we note that $\epsilon_n(\psi,\phi)$ can be represented by Schmidt coefficients.
We can maximize the fidelity
between the target state and the converted state
in terms of local unitary operations $U_A$ and $U_B$ as follows (see Lemma \ref{vidalem} in Appendix \ref{defnotes}):
\begin{equation}
\max_{U_A, U_B : \text{unitary}}F((U_A\otimes U_B)\psi, \phi) = F(P_\psi^\downarrow, P_\phi^\downarrow).\label{unitary}
\end{equation}
Thus, the following holds:
\begin{equation}
\epsilon_n(\psi,\phi)^2 = 1 - \max_{m\in\natn}F(P_\psi^{n\downarrow}, P_\phi^{m\downarrow}), \label{classical}
\end{equation}
where $P_\psi$ and $P_\phi$ are the probability distributions composed of the squared Schmidt coefficients of $\psi$ and $\phi$ respectively, and $\downarrow$ denotes reordered version of the probability distribution in a descending manner (see also Appendix \ref{defnotes}).
Then, the problem is again reduced to the analysis on probability distributions.

Before we go further,
we mention the following excepted type of distribution for our analysis.
We call a distribution $P$ on ${\cal A}$
a lattice distribution when there exist $x,d\in \real$ such that 
$(\log P(a) - x)/d \in \zah$ for any $a \in {\cal A}$ with $P(a)>0$.
Then, the following lemma is the heart of our analysis for LU conversion.
\begin{Lem}\label{mainlem}
When $\left|m_n-\frac{S_{\psi}}{S_{\phi}}n\right|/\sqrt{n}\rightarrow\infty$, we have
\begin{align}
 \lim_{n\rightarrow\infty}F\left(P_{\psi}^{n\downarrow}, P_{\phi}^{m_n\downarrow}\right)=0.
\end{align}
Moreover, we assume that neither $P_\psi$ nor $P_\phi$ is lattice distributions.
Then the following holds when $\frac{m_n-\frac{S_\psi}{S_\phi}n}{\sqrt{n}}\rightarrow b$.
\begin{align}
& \lim_{n\rightarrow\infty}F\left(P_{\psi}^{n\downarrow}, P_{\phi}^{m_n\downarrow}\right) \nonumber\\
=& 
\sqrt{\frac{2}{\cpq^{\frac{1}{2}}+\cpq^{-\frac{1}{2}}}}\exp \left[-\frac{b^2S_\phi^2}{4(1+\cpq)V_\psi}\right].\label{limfid}
\end{align}
\end{Lem}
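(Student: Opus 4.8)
The plan is to reduce the quantum fidelity to a purely classical Bhattacharyya-type sum over matched ranks and then evaluate it by a refined central-limit analysis of the \emph{surprisal} of the product distributions. Since the $\downarrow$ operation only relabels outcomes by decreasing probability, writing $p_k$ (resp.\ $q_k$) for the $k$-th largest value of $P_\psi^{\otimes n}$ (resp.\ $P_\phi^{\otimes m_n}$), we have exactly
\[
F\!\left(P_{\psi}^{n\downarrow},P_{\phi}^{m_n\downarrow}\right)=\sum_{k\ge 1}\sqrt{p_k\,q_k}.
\]
First I would record that the self-information $-\log P_\psi^{\otimes n}(X)=\sum_i-\log P_\psi(X_i)$ is a sum of i.i.d.\ terms with mean $nS_\psi$ and variance $nV_\psi$, and likewise for $\phi$ with $m_n$ copies; the whole problem is to translate \emph{rank} into surprisal.

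The key technical step, and the place where the announced new method of asymptotic analysis lives, is a sharp asymptotic for the counting function $N_\psi(\ell):=\#\{x:-\log P_\psi^{\otimes n}(x)\le \ell\}$. Writing the count as $2^{\ell}$ times the surprisal density and completing the square against the Gaussian furnished by the central limit theorem (equivalently, recognizing $\sum_x 1$ as a truncated moment generating function of the surprisal evaluated at $\ln 2$), I expect
\[
N_\psi(\ell)\simeq 2^{nS_\psi}e^{nV_\psi(\ln 2)^2/2}\,\Phi\!\left(\tfrac{\ell-nS_\psi-nV_\psi\ln 2}{\sqrt{nV_\psi}}\right),
\]
a \emph{tilted} Gaussian profile, with the analogous formula for $\phi$. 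The non-lattice hypothesis on $P_\psi$ and $P_\phi$ is precisely what licenses an integral/local CLT (Edgeworth- or Berry--Esseen-type) here with no oscillatory correction; for lattice distributions the count jumps periodically and this clean profile fails, which is why those cases are excluded. Inverting $N_\psi,N_\phi$ then expresses $p_k=2^{-\ell_\psi(k)}$ and $q_k=2^{-\ell_\phi(k)}$ through the rank-matching relation $N_\psi(\ell_\psi)=N_\phi(\ell_\phi)=k$.

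Next I would convert the sum into an integral over the standardized $\psi$-surprisal $s$ via $dk=N_\psi'(\ell)\,d\ell$, and substitute $\ell_\phi(s)$ through $\Phi(t)=(A_\psi/A_\phi)\Phi(s)$, where $A_\psi,A_\phi$ are the prefactors $2^{nS_\psi}e^{nV_\psi(\ln 2)^2/2}$ and $2^{m_nS_\phi}e^{m_nV_\phi(\ln 2)^2/2}$ above. Inserting $m_n=\tfrac{S_\psi}{S_\phi}n+b\sqrt n+o(\sqrt n)$ makes the leading exponents depend on $b$ and, through $m_nV_\phi=\cpq V_\psi n+o(n)$, on the conversion characteristics $\cpq$. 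The integrand is then Gaussian in $s$ up to slowly varying factors, so Laplace's method applies: the apparently divergent $2^{\pm nS}$ and $e^{\pm nV(\ln 2)^2/2}$ factors coming from $p_k$, $q_k$ and $dk$ must cancel, leaving the finite Gaussian overlap asserted in \eqref{limfid}.

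The main obstacle is exactly this bookkeeping, on two fronts: verifying that the super-exponential factors cancel and extracting the precise prefactor $\sqrt{2/(\cpq^{1/2}+\cpq^{-1/2})}$ together with the exponent $-b^2S_\phi^2/(4(1+\cpq)V_\psi)$, and ensuring that the tilted-Gaussian approximation for $N_\psi,N_\phi$ holds \emph{uniformly} over the band of ranks that actually contribute to the sum rather than merely in the typical-fluctuation window. A useful internal check is that $b=0$, $\cpq=1$ gives $F\to 1$, consistent with Theorem~\ref{noloss}. Finally, the first assertion, vanishing of $F$ when $|m_n-\tfrac{S_\psi}{S_\phi}n|/\sqrt n\to\infty$, is the degenerate limit $b\to\pm\infty$: the two tilted surprisal profiles separate at a scale exceeding $\sqrt n$, so their rank-matched overlap is squeezed to $0$; I would establish this directly by a one-sided Gaussian-tail estimate rather than through the full saddle-point evaluation.
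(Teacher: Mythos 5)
Your plan follows essentially the same route as the paper's proof in Appendix \ref{sa32}: reduce to the classical Bhattacharyya sum over matched ranks, derive a sharp asymptotic for the counting function of the surprisal, invert and rank-match to express $q_k$ through the standardized $\psi$-surprisal, and finish with the central limit theorem and a Gaussian overlap integral; your separate tail estimate for the first assertion likewise matches the paper's Lemma \ref{upper-thm}. Your tilted-Gaussian formula for $N_\psi(\ell)$ is correct and is, to the required $o(1)$ precision in the logarithm over the window $\ell=nS_\psi+O(\sqrt n)$, exactly the Bahadur--Rao expansion the paper invokes; the one point to tighten is that a plain CLT/Edgeworth substitution under the exponential weight $2^{L}$ is not by itself licensed (the weight shifts mass by $O(n)$), so the rigorous justification is the strong large deviation theorem of Bahadur and Rao (equivalently, a local limit argument at the boundary after exponential tilting), which is precisely where the non-lattice hypothesis enters, as you anticipated.
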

This result is interesting itself at a point that it characterizes the distance of quite generic probability distributions 
only by using entropy and variance.
To make such analysis possible, we derive a new method of comparing probabilities by using strong large deviation by Bahadur and Rao \cite{bahadur60:_deviat_sampl_mean}.
At first, we can interpret the fidelity as the following expectation value.
\begin{align}
 F\left(P_{\psi}^{n\downarrow}, P_{\phi}^{m_n\downarrow}\right)&=\sum_{j} P^{n\downarrow}_{\psi} (j)
 \sqrt{\frac{P^{m_n\downarrow}_{\phi} (j)}{P^{n\downarrow}_{\psi} (j)}}\nonumber\\
 &=:E_{P^{n\downarrow}_{\psi}}\left[\sqrt{\frac{P^{m_n\downarrow}_{\phi}}{P^{n\downarrow}_{\psi}}}\right].\label{expect}
\end{align}
We have to deal with the ratio of probability $P_{\psi}^{n\downarrow}(i)$ to $P_{\phi}^{m_n\downarrow}(i)$ with the same number $i$.
To do so, we apply the strong large deviation to the uniform distribution, and give the  asymptotic expansion of the number of elements of the event $\{j| \log P^{n\downarrow}_{\psi(\phi)} (j) \ge - n S_{\psi(\phi)}+ \sqrt{n}t \}$ for each $t$, which roughly gives the order $i$ of the value $P^{n,\downarrow}_{\psi(\phi)}(i)$ corresponding to $t$ such that
\begin{align}
t_{\psi(\phi), n}(i):=t=\frac{\log P^{n\downarrow}_{\psi(\phi)} (i) + n S_{\psi(\phi)}}{\sqrt{n}}.
\end{align}
Then, we evaluate
$
 t_{\phi, m_n}(t_{\psi, n}^{-1}(t))
$
for each $t$ by solving the following equation about $x$ by using the above mentioned expansion:
\begin{align}
 P_C\{j| t_{\phi,m_n}(j) \ge x \}=P_C\{k| t_{\psi,n}(k) \ge t \},
\end{align}
where $P_C$ is the counting measure.
In this way, we make $P_\phi^{m_n\downarrow}(i)$ a function of only $t_{\psi,n}(i)$ up to irrelevant terms.
Finally, thanks to this function, we can calculate the expectation (\ref{expect}) by applying the central limit theorem to $t_{\psi,n}$.
The detailed proof of Lemma \ref{mainlem} is in Appendix \ref{sa32}.

Now, we reinterpret this lemma as for LU conversion.
We call a state $u\in\hH_A\otimes\hH_B$ a lattice state if $P_u$ is a lattice distribution.
In this case, because of \eqref{limfid} of Lemma \ref{mainlem}, the fidelity
$F\left(P_{\psi}^{n\downarrow}, P_{\phi}^{m_n\downarrow}\right)$
has asymptotic maximum  value $\sqrt{\frac{2}{{\cpq}^{\frac{1}{2}}+{\cpq}^{-\frac{1}{2}}}}$ when $b=0$.
Hence, 
the optimal number $m_n$ of initial copies of $\phi$ 
has the asymptotic expansion as ${S_\psi}/{S_\phi}n+o(\sqrt{n})$.
Therefore, combining (\ref{classical}), we have the following Theorem \ref{main}.

\begin{Thm}\label{main}
The following holds when neither $\psi$ nor $\phi$ is lattice states.
\begin{align}
\lim_{n\rightarrow\infty}\epsilon_n(\psi, \phi)^2= 1-\sqrt{\frac{2}{{\cpq}^{\frac{1}{2}}+{\cpq}^{-\frac{1}{2}}}}.
\label{limep}
\end{align}
Moreover, the optimal number of initial copies of $\phi$ is $\frac{S_{\psi}}{S_{\phi}}n + o(\sqrt{n})$.
\end{Thm}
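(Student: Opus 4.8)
The plan is to deduce Theorem \ref{main} directly from Lemma \ref{mainlem} together with the identity \eqref{classical}. Writing $f_\ast := \sqrt{2/(\cpq^{1/2}+\cpq^{-1/2})}$ for the $b=0$ value appearing in \eqref{limfid}, the relation \eqref{classical} gives $\epsilon_n(\psi,\phi)^2 = 1 - \max_{m\in\natn}F(P_\psi^{n\downarrow},P_\phi^{m\downarrow})$, so the whole problem reduces to proving $\max_{m\in\natn}F(P_\psi^{n\downarrow},P_\phi^{m\downarrow})\to f_\ast$. Lemma \ref{mainlem} tells us that under the rescaling $b_n(m):=(m-\tfrac{S_\psi}{S_\phi}n)/\sqrt n$ the fidelity approaches the Gaussian profile $f_\ast\exp[-b^2S_\phi^2/(4(1+\cpq)V_\psi)]$, which is strictly maximized at $b=0$ and decays to $0$ as $|b|\to\infty$; the entire argument is about turning this pointwise-in-$b$ picture into a statement about the maximum over $m$.

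For the lower bound I would exhibit a near-optimal sequence: taking $m_n$ to be the integer nearest $\tfrac{S_\psi}{S_\phi}n$ makes $b_n(m_n)\to 0$, so the second part of Lemma \ref{mainlem} yields $F(P_\psi^{n\downarrow},P_\phi^{m_n\downarrow})\to f_\ast$. Since $\max_{m}F\ge F(P_\psi^{n\downarrow},P_\phi^{m_n\downarrow})$, this gives $\liminf_n\max_m F\ge f_\ast$.

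The matching upper bound is the delicate part, since it interchanges the per-$n$ maximum over $m$ with the limit $n\to\infty$, whereas Lemma \ref{mainlem} only controls sequences with a prescribed rescaled limit. I would argue by contradiction through subsequences. Choose near-maximizers $m_n$ with $F(P_\psi^{n\downarrow},P_\phi^{m_n\downarrow})\ge\max_m F-1/n$; if $\limsup_n\max_m F>f_\ast$, there is a subsequence $n_k$ with $F(P_\psi^{n_k\downarrow},P_\phi^{m_{n_k}\downarrow})\to L>f_\ast$. Now inspect $b_{n_k}(m_{n_k})$: if it has an unbounded sub-subsequence, the first part of Lemma \ref{mainlem} forces the fidelity to $0$; otherwise it is bounded, and along a convergent sub-subsequence $b_{n_k}(m_{n_k})\to b^\ast\in\real$ the second part gives the limit $f_\ast\exp[-(b^\ast)^2S_\phi^2/(4(1+\cpq)V_\psi)]\le f_\ast$. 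Both cases contradict $L>f_\ast\,(>0)$, so $\limsup_n\max_m F\le f_\ast$, which together with the lower bound establishes \eqref{limep}.

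The statement about the optimal number of copies follows from the same machinery applied to the genuine maximizers $m_n^\ast$. Since $F(P_\psi^{n\downarrow},P_\phi^{m_n^\ast\downarrow})=\max_m F\to f_\ast>0$, the rescaled deviation $b_n^\ast:=b_n(m_n^\ast)$ can have no unbounded subsequence (that would drive the fidelity to $0$), hence is bounded; and any finite subsequential limit $b^\ast$ must satisfy $f_\ast\exp[-(b^\ast)^2S_\phi^2/(4(1+\cpq)V_\psi)]=f_\ast$, forcing $b^\ast=0$. A bounded sequence all of whose subsequential limits equal $0$ converges to $0$, so $b_n^\ast\to 0$, i.e.\ $m_n^\ast=\tfrac{S_\psi}{S_\phi}n+o(\sqrt n)$. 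The main obstacle throughout is justifying the use of Lemma \ref{mainlem} along arbitrary subsequences $n_k$: one must verify that its strong-large-deviation and central-limit proof depends on $m$ and $n$ only through the rescaled parameter $b_n(m)$, so that it transfers verbatim to subsequences — this is precisely what legitimizes the $\max_m$--$\lim_n$ interchange.
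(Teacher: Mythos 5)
Your proposal is correct and takes essentially the same route as the paper: Theorem~\ref{main} is deduced from Lemma~\ref{mainlem} together with \eqref{classical}, the point being that the limiting profile in \eqref{limfid} is maximized at $b=0$ with value $\sqrt{2/(\cpq^{1/2}+\cpq^{-1/2})}$ while unbounded $|b_n(m)|$ drives the fidelity to $0$. Your subsequence/contradiction argument merely makes explicit the $\max_m$--$\lim_n$ interchange that the paper leaves implicit, and the transfer of Lemma~\ref{mainlem} to subsequences that you flag is unproblematic (e.g.\ by extending a subsequence $m_{n_k}$ to a full sequence with the same rescaled limit).
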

If the conversion characteristics $\cpq$ is close to $1$, $\lim_{n\rightarrow\infty}\epsilon_n(\psi, \phi)$ is close to $0$, i.e., the state $\phi^{\otimes m_n}$ can be asymptotically precisely approximated by LU conversion.
This fact guarantees that even if $\cpq$ is slightly deviated from $1$, $\lim \delta_n(\psi,\phi)$ is not far from $0$.
 
 However, the above mentioned rough estimation does not work well for lattice distributions, which makes the analysis more difficult.
For every lattice state $\psi$, there are non-lattice states $\psi_{\alpha}$ as close to it as we want, where $\psi_{\alpha}$ is parameterized by $\alpha$ such that $\psi_{\alpha}\rightarrow \psi(\alpha\rightarrow 0)$.
 Even for lattice states, we would have the same formula as \eqref{limep}
if we could prove
 \begin{align}
  \lim_{\alpha\rightarrow 0}\lim_{n\rightarrow\infty}\epsilon_n(\psi,\psi_{\alpha})=0.\label{8-6-1}
 \end{align}
However, we have a counter example as explained latter.

Instead of Theorem \ref{main}, 
for lattice states,  we have the following lower bound, 
which is proven by some estimation and the central limit theorem (see Lemma \ref{upper-thm} in Appendix \ref{sa31}).
\begin{Thm}\label{low-err}
 The following holds.
\begin{align}
 \lim_{n\rightarrow\infty}\epsilon_n(\psi, \phi)^2\geq 1-\sqrt{\frac{2}{{\cpq}^{\frac{1}{2}}+{\cpq}^{-\frac{1}{2}}}}.
\end{align}
\end{Thm}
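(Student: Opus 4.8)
The plan is to prove the equivalent upper bound on the classical fidelity. By \eqref{classical} the assertion $\liminf_{n}\epsilon_n(\psi,\phi)^2\geq 1-\sqrt{2/(\cpq^{1/2}+\cpq^{-1/2})}$ is the same as
\[
\limsup_{n\to\infty}\max_{m\in\natn}F\left(P_\psi^{n\downarrow},P_\phi^{m\downarrow}\right)\leq \sqrt{\frac{2}{\cpq^{1/2}+\cpq^{-1/2}}},
\]
so it suffices to bound the fidelity from above by this constant plus a vanishing term, uniformly in $m$. The point of the theorem is that this upper direction survives for lattice states, whereas the matching lower bound (Theorem \ref{main}) does not.

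First I would cut down the range of relevant $m$. The first assertion of Lemma \ref{mainlem} carries no lattice hypothesis, so $F\to 0$ whenever $|m_n-(S_\psi/S_\phi)n|/\sqrt{n}\to\infty$. Upgrading this to a uniform tail estimate, I would exhibit a constant $K$ with $F\left(P_\psi^{n\downarrow},P_\phi^{m\downarrow}\right)$ already below the target whenever $|m-(S_\psi/S_\phi)n|>K\sqrt{n}$, so that $\max_m$ is controlled by the window $m=(S_\psi/S_\phi)n+O(\sqrt{n})$.

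On this window I would reuse the expectation form \eqref{expect} and the central limit theorem for $t_{\psi,n}$ under $P_\psi^{n\downarrow}$, exactly as in the proof of Lemma \ref{mainlem}. The sole place where the non-lattice hypothesis entered there is the Bahadur--Rao strong large deviation supplying the precise asymptotics of the counting function $\#\{j\mid \log P_\phi^{m\downarrow}(j)\geq -mS_\phi+\sqrt{m}\,x\}$, and hence the exact dependence of $P_\phi^{m\downarrow}(i)$ on $t_{\psi,n}(i)$. For a lattice state this asymptotic acquires a bounded, oscillatory periodic factor, so no limit exists; but that factor is bounded, and replacing the asymptotic equality by the corresponding one-sided inequality (bounding the periodic factor by a constant in the direction that enlarges the count) yields an upper bound on $P_\phi^{m\downarrow}(i)$, hence on the integrand $\sqrt{P_\phi^{m\downarrow}(i)/P_\psi^{n\downarrow}(i)}$, as a function of $t_{\psi,n}(i)$ alone. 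Substituting into \eqref{expect} and applying the central limit theorem to $t_{\psi,n}$ reproduces, now as an upper bound, the Gaussian integral of \eqref{limfid}, namely $\sqrt{2/(\cpq^{1/2}+\cpq^{-1/2})}\,\exp[-b^2S_\phi^2/(4(1+\cpq)V_\psi)]$ for $m=(S_\psi/S_\phi)n+b\sqrt{n}+o(\sqrt{n})$. Since the exponential factor is at most $1$, maximizing over $b$ leaves precisely $\sqrt{2/(\cpq^{1/2}+\cpq^{-1/2})}$.

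The hard part will be the uniformity in $m$: Lemma \ref{mainlem} is phrased sequence by sequence, whereas here $\max_m$ sits inside the limit, so the Gaussian upper bound on the integrand must hold uniformly across the entire $O(\sqrt{n})$ window and the oscillatory lattice correction must be dominated by a single constant independent of $n$ and $m$. Establishing this uniform control of the lattice strong-large-deviation prefactor, together with the uniform tail estimate that discards $m$ outside the window, is the main obstacle; once both are in place, the Gaussian integral and the maximization over $b$ are routine.
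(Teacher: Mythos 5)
Your reduction via \eqref{classical} and your handling of the window $|m-(S_\psi/S_\phi)n|=O(\sqrt n)$ are fine (the uniformity over $m$ you worry about is resolved by a routine subsequence/compactness argument), but the core of your proof has a genuine gap. You propose to salvage the pointwise Bahadur--Rao asymptotics in the lattice case by bounding the oscillatory periodic prefactor one-sidedly ``in the direction that enlarges the count.'' The problem is that this prefactor genuinely oscillates between two constants $0<c_1<c_2$, so replacing it by $c_2$ perturbs $\log J_{\phi,m_n}$ by an additive error of order $1$ that does \emph{not} vanish; tracing through the inversion $t_{\phi,m_n}\circ t_{\psi,n}^{-1}$, this becomes an $O(1)$ additive error in $\sqrt{n}\,\Delta Z$, hence a multiplicative constant $c>1$ on the integrand $\sqrt{P_\phi^{m\downarrow}/P_\psi^{n\downarrow}}$. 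Your Gaussian integral therefore comes out as $c\cdot\sqrt{2/(\cpq^{1/2}+\cpq^{-1/2})}$ rather than the exact constant, which does not prove the theorem (and may even be vacuous if $c$ pushes the bound above $1$). This is precisely the step the paper flags as failing for lattice distributions (the fourth equality of \eqref{6-5-5}), and a one-sided bound on the prefactor does not repair it.

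The paper's actual proof of Lemma \ref{upper-thm} sidesteps the pointwise analysis entirely. It partitions the index set into exponentially large blocks $S_n^{\psi}(\tilde{c}_{j-1}^N,\tilde{c}_j^N)$ and uses the monotonicity of the fidelity under coarse-graining,
\begin{align}
F\bigl(P_\psi^{n\downarrow},P_\phi^{m_n\downarrow}\bigr)\le \sum_{j}\sqrt{P_\psi^{n\downarrow}(B_j)\,P_\phi^{m_n\downarrow}(B_j)},
\end{align}
where the block probabilities converge (by the central limit theorem for the information spectrum, Lemma 12 of \cite{w.13:_asymp_class_locc_conver_its}) to increments of the Gaussian CDFs $\Phi$ and $\Phi_{\psi,\phi,b}$ regardless of any lattice structure. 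Letting $n\to\infty$ and then refining the partition turns the sum into the Bhattacharyya integral $\int\sqrt{\Phi'\,\Phi_{\psi,\phi,b}'}$, which equals the claimed constant exactly. If you want a proof valid for lattice states, you need an argument of this cumulative/blocking type rather than a pointwise strong-large-deviation estimate.
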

Now, we give a counter example for \eqref{8-6-1}.
When $\psi$ is a maximally entangled state (which is a lattice state) and $\phi$ is not, $\cpq = \infty$ and hence $\lim_{n\rightarrow\infty}\epsilon_n(\psi,\phi)=1$ always holds by Theorem \ref{low-err}, which means that this LU conversion always has maximum error.
 Thus, for this $\psi$, we have $\lim_{\alpha\rightarrow 0}\lim_{n\rightarrow\infty}\epsilon_n(\psi,\psi_{\alpha})=1\neq0$, which gives a counter example for \eqref{8-6-1}.
 Theorem \ref{low-err} also implies that lattice states do not have advantage in LU convertibility in comparison to non-lattice states.

In the finite setting, the fidelity becomes $1$ only 
in a limited case, i.e., the case when 
$P_{\psi}^{n\downarrow}=P_{\phi}^{m\downarrow}$
with a certain $m$.
However, due to this theorem,
as the number $n$ goes to infinity,
the fidelity approaches $1$,
i.e., these two states are inter-convertible by LU conversion
under the weaker condition $\cpq = 1$.
In fact, such a non-trivial example is numerically given in the next subsection.

\subsection{Examples in the System where Alice and Bob Have each Two Qubit}\label{LU-example}
\begin{figure}[!t]
\centering
\includegraphics[clip, width=2.8in]{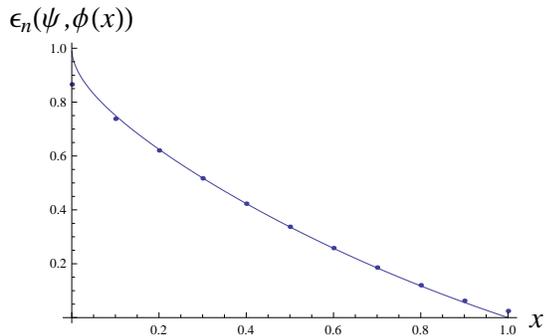}
\caption{The solid line is the graph of $[1-\text{sech}^{\frac{1}{2}}\frac{1}{2}\log C_{\psi,\phi(x)}]^{1/2}$, which equals to the asymptotic error $\lim\epsilon_n(\psi,\phi(x))$ of LU conversion according to Theorem \ref{main}. We see that $\lim\epsilon_n(\psi,\phi(x))$ is $1$ with $x=0$, and is almost equal to $0$ with $x=1$ because $C_{\psi,\phi(0)}=0$ and $C_{\psi,\phi(1)}\approx 1$. 
The asymptotic error $\lim\epsilon_n(\psi,\phi(x))$ takes various values in proportion to $x$. The dots are the result of numerical calculation of $\epsilon_{3000}(\psi,\phi(x))$ for $x=0.1j\;(j=0,1,\dots,10)$. Indeed, they are close to the asymptotic curve given as the solid line.}
\label{graph0}
\end{figure}
\begin{figure}[!t]
\centering
\includegraphics[clip, width=2.8in]{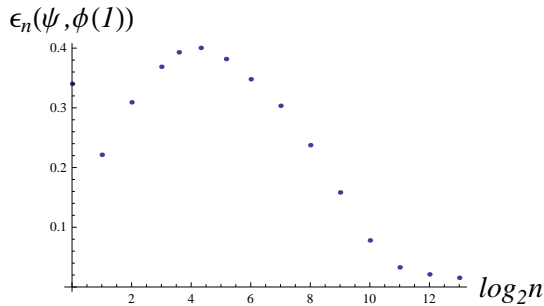}
\caption{The plot of numerical calculation of $\epsilon_n(\psi,\phi(1))$ vs. $n$ from $n=1$ to $n=2^{13}$. We see that the error $\epsilon_n(\psi,\phi(1))$ converges to almost $0$ because $C_{\psi,\phi(1)}$ is close to $1$, just as Theorem \ref{main} states.}
\label{graph1}
\end{figure}
Now, we give examples
in a bipartite system $\hH_A\otimes\hH_B$, where $\hH_A = \hH_B = \comp^2\otimes\comp^2$, i.e., both Alice and Bob have each two qubits and share entangled states between them.
 Let $\{\ket{0},\ket{1}\}$ be a computational basis of each qubit.
At first, in order to see the asymptotic behavior of the error of the LU conversion given in Theorem \ref{main}, we define the initial and target pure states $\ket{\psi}\in\hH_A\otimes\hH_B$ and $\ket{\phi(x)}\in\hH_A\otimes\hH_B$ for $0\leq x\leq 1$ as
\begin{align}
&\ket{\psi}:=\sqrt{0.0048}\ket{00}_A\otimes\ket{00}_B 
+\sqrt{0.4752}\ket{01}_A\otimes\ket{01}_B \nonumber\\
&\hspace{15pt}+\sqrt{0.0052}\ket{10}_A\otimes\ket{10}_B
+\sqrt{0.5148}\ket{11}_A\otimes\ket{11}_B,\\
&\ket{\phi(x)}:= \sqrt{(-ax + 0.5)(-bx + 0.5)}\ket{00}_A\otimes\ket{00}_B \nonumber\\
&\hspace{30pt}+\sqrt{(-ax + 0.5)(bx + 0.5)}\ket{01}_A\otimes\ket{01}_B \nonumber\\
&\hspace{30pt}+\sqrt{(ax+0.5)(-bx + 0.5)}\ket{10}_A\otimes\ket{10}_B\nonumber\\
&\hspace{30pt}+\sqrt{(ax+0.5)(bx + 0.5)}\ket{11}_A\otimes\ket{11}_B,
\end{align}
where $a:=0.225$ and $b:= 0.1996180626854719$.
In FIG. \ref{graph0}, the solid line is the asymptotic error $\lim\epsilon_n(\psi,\phi(x))$ given in Theorem \ref{main} as a function of $x$,
 where we see that $\lim\epsilon_n(\psi,\phi(x))=1$ with $x=0$, and $\lim\epsilon_n(\psi,\phi(x))\approx 0$ with $x=1$ because
$\phi(0)$ is a maximally entangled state, and $C_{\psi,\phi(1)}=1+1.11\times 10^{-15} \approx 1$.
We can also see that the limit of the error can take various values dependently of the target states.
As for the error for the non-asymptotic setting, the dots in FIG. \ref{graph0} are the numerical
results of $\epsilon_{3000}(\psi,\phi(x))$ for $x=0.1j\;(j=0,1,\dots,10)$. Indeed, we can see that the error for large $n = 3000$ is close to (\ref{limep}).
In particular, $\psi$ and $\phi(1)$ are obviously not locally unitarily equivalent, and
satisfy $C_{\psi,\phi(1)} \approx 1$ as mentioned above.
Hence, the pair $(\psi, \phi(1))$ is a non-trivial example of an asymptotically precisely LU convertible pair.
Thus, we next see the convergence of $\epsilon_n(\psi,\phi(1))$ to $0$ as $n$ goes to infinity in detail.
FIG. \ref{graph1} plots the numerical results of $\epsilon_n(\psi,\phi(1))$ for several cases of $n$.
We can see that indeed the error is converging to $0$ as $n$ increases.

\section{How to Overcome the Asymptotic Irreversibility} \label{overcome_mother}
 In this section, we establish two methods to overcome the irreversibility of LOCC conversion.
 The first is to decrease the required number of copies in the recovery process, and the second is to add an appropriate supplementary state.
 Note that the first method does not work for LU conversion, since LU convertibility is not affected at all if the required number is decreased in the recovery process, while the second works similarly as for LOCC, since it is based on the same reversible condition $\cpq=1$.
\subsection{Recovery with Some Loss}\label{someloss-sec}
\begin{figure}[!t]
\centering
\includegraphics[clip, width=2.8in]{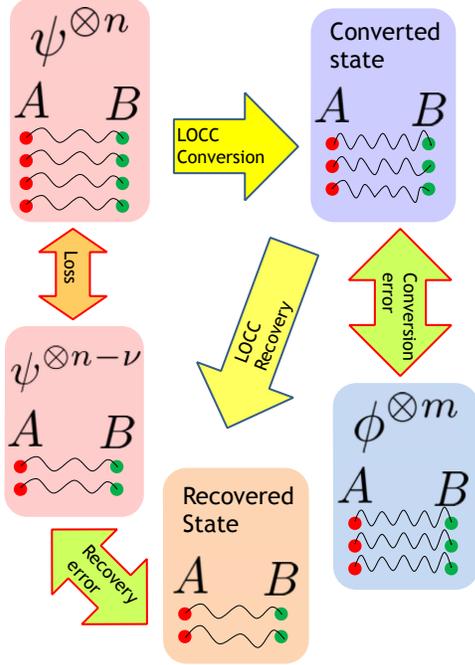}
\caption{The schematic diagram of LOCC conversion and its recovery with some loss in the number of copies of the recovered state. 
The recovered number $n-\nu$ of copies of $\psi$ 
is slightly smaller than
the initial number $n$ of copies of $\psi$ 
due to the loss.}
\label{figure2}
\end{figure}
In Sec. \ref{noloss-sec0},
the number of the copies of the state to be recovered has been restricted to the same number as the initial number $n$.
In this case, 
we have already showed that perfect conversion and recovery 
are not simultaneously realized even in the asymptotic setting if $\cpq \neq 1$.
Then, we relax this condition so
that the number $n-\nu$ of the copies to be recovered can be smaller than the initial number $n$
as is described in Fig. \ref{figure2}.
That is, we permit some loss $\nu\geq 0$ in the number of copies to be recovered.
Then, under this condition, we investigate the compatibility between conversion and recovery
when $\cpq \neq 1$ moreover.
In order to investigate the order of the amount of loss to realize 
perfect conversion and recovery simultaneously,
we consider the following quantity.
\begin{align*}
R(\psi,\phi)
:=
\inf_{\gamma}
\left\{
\gamma
\left|
\begin{array}{l}
\displaystyle \min_{m\in\natn , C,D:\text{LOCC}} [
B(C(\psi^{\otimes n}),\phi^{\otimes m}) \\
\quad +B(\psi^{\otimes (n-n^{\gamma})},D\circ C(\psi^{\otimes n})) ]\\
\to 0 \qquad (n\to \infty)
\end{array}
\right.
\right\}.
\end{align*}
This quantity is the infimum of the order of the sufficient amount of loss for asymptotic compatibility between conversion and recovery.

Now, we examine how large loss is necessary and sufficient to overcome irreversibility.
At first, we have to success the forward conversion.
When $\cpq\neq 1$, we can only make $S_\psi/S_\phi n - n^{\gamma}$($\gamma >1/2$) copies of $\phi$ from $n$ copies of $\psi$ via LOCC.
Thus, it is enough to consider the possibility of the recovery after the conversion to $S_\psi/S_\phi n - n^{\gamma}$ copies of $\phi$ with $\gamma>1/2$.
Then, we cannot recover $n-\beta \sqrt{n}$ copies after such conversion, applying the same analysis to the recovery process as that of the forward conversion.
However, we can recover $n-n^{\gamma}$ copies for any $\gamma>1/2$ because of the possibility of the conversion to $S_\psi/S_\phi n - n^{\gamma}$ copies with the same $\gamma$.
In such a way of estimation, we conclude that even when $\cpq\neq1$, asymptotic compatibility is overcome if and only if the recovery number is $n-n^{\gamma}$ with $\gamma>1/2$.
In other words, we obtain the following theorem. The detail of the proof is in Appendix \ref{Proof_HT3}.
\begin{Thm}\label{HT3}
The relation $R(\psi,\phi)=\frac{1}{2}$ holds when $\cpq \ne 1$.
More precisely, the following holds for an arbitrary number $\beta$ when $\cpq \neq 1$;
\begin{align}
 &\lim_{n\rightarrow\infty}\min_{m\in\natn , C,D:\text{LOCC}} 
 B(C(\psi^{\otimes n}),\phi^{\otimes m})\nonumber\\
&\quad+B(\psi^{\otimes (n-\beta n^{\frac{1}{2}})},D\circ C(\psi^{\otimes n})) \nonumber\\
\neq& 0.
\end{align}
\end{Thm}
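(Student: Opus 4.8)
The plan is to split the round-trip error into a forward term and a recovery term, re-express both through the second-order asymptotics of the \emph{optimal} LOCC conversion fidelity, and show that when $\cpq\neq1$ no choice of the intermediate copy number $m$ can push both terms to zero once the permitted loss is only of order $\sqrt n$. Write the two summands as $B_c:=B(C(\psi^{\otimes n}),\phi^{\otimes m})$ and $B_r:=B(\psi^{\otimes(n-\beta n^{1/2})},D\circ C(\psi^{\otimes n}))$. If the minimized sum tended to $0$ along some near-optimal sequence $(m_n,C_n,D_n)$, then $B_c\to0$ and $B_r\to0$ separately. Using the monotonicity of the Bures distance under the LOCC (hence CPTP) map $D_n$ together with the triangle inequality, $B(\psi^{\otimes(n-\beta n^{1/2})},D_n(\phi^{\otimes m_n}))\le B_r+B(D_nC_n(\psi^{\otimes n}),D_n(\phi^{\otimes m_n}))\le B_r+B_c\to0$. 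Hence, at the common value $m_n$, both the optimal forward fidelity $F_c(n,m):=\max_{C}F(C(\psi^{\otimes n}),\phi^{\otimes m})$ and the optimal recovery fidelity $F_r(n,m):=\max_{D}F(\psi^{\otimes(n-\beta n^{1/2})},D(\phi^{\otimes m}))$ would have to tend to $1$ at $m=m_n$.

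Next I would invoke the second-order theory of LOCC conversion between product pure states from \cite{w.13:_asymp_class_locc_conver_its}, which reduces each optimal fidelity to a majorization comparison of the Gaussian fluctuations of $-\log P_\psi^{\otimes n}$ and $-\log P_\phi^{\otimes m}$ and expresses the limit through the associated limiting (Rayleigh-normal type) distribution, whose shape is governed by the ratio of the fluctuation variances $nV_\psi$ and $mV_\phi$; at the first-order rate $m=\frac{S_\psi}{S_\phi}n$ this ratio equals exactly $\cpq$, so it is $1$ precisely when $\cpq=1$. Writing $m=\frac{S_\psi}{S_\phi}n+b\sqrt n$, the decisive qualitative fact for $\cpq\neq1$ is that the limiting forward fidelity is strictly below $1$ for \emph{every} finite $b$ and approaches $1$ only as $b\to-\infty$, while the limiting recovery fidelity, whose first-order threshold is merely shifted by the constant $\tfrac{S_\psi}{S_\phi}\beta$ because of the loss, is likewise strictly below $1$ for every finite $b$ and approaches $1$ only as $b\to+\infty$; the variance mismatch when $\cpq\neq1$ is exactly what forbids any finite mean shift from restoring perfect majorization in either direction. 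Since $F_c(n,m)$ is non-increasing in $m$ and $F_r(n,m)$ is non-decreasing in $m$, these pointwise limits become one-sided uniform bounds: for any fixed $M$, if $b_n\ge-M$ along a subsequence then $F_c(n,m_n)\le F_c(n,\lceil\frac{S_\psi}{S_\phi}n-M\sqrt n\rceil)$ is bounded away from $1$, and if $b_n\le M$ then $F_r(n,m_n)$ is bounded away from $1$. Thus $F_c\to1$ forces $b_n\to-\infty$ and $F_r\to1$ forces $b_n\to+\infty$, which are incompatible; crucially, the constant $\beta$ cannot rescue this because it only moves a finite threshold by $O(1)$ on the $b$-scale. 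This contradiction establishes the displayed inequality, i.e. $R(\psi,\phi)\ge\frac12$.

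For the matching bound $R(\psi,\phi)\le\frac12$ I would use the explicit protocol already indicated before the statement: convert to $m=\lceil\frac{S_\psi}{S_\phi}n-n^\gamma\rceil$ with $\tfrac12<\gamma<1$ and recover $n-\nu$ copies with $\nu=O(n^\gamma)$. Then both the forward conversion and the recovery lie below their respective thresholds by an amount of order $n^\gamma\gg\sqrt n$, so by the deep-interior (large-deviation) regime both optimal fidelities tend to $1$ and the round-trip error vanishes, giving $R\le\gamma$ for every $\gamma>\tfrac12$ and hence $R\le\frac12$. Combined with the previous paragraph this yields $R(\psi,\phi)=\frac12$ for $\cpq\neq1$.

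The step I expect to be the main obstacle is the uniform control over \emph{all} $m$, including sequences with $b_n\to\pm\infty$ or oscillating behaviour, rather than the pointwise limiting formulas themselves. The clean device is the monotonicity of the optimal conversion fidelity in the target copy number, which upgrades the pointwise limits at fixed reference arguments into the one-sided bounds used above; the quantitative heart on which the whole separation rests is the fact, to be imported from \cite{w.13:_asymp_class_locc_conver_its}, that the limiting fidelity is strictly below $1$ at every finite second-order argument exactly when the variance ratio differs from $1$, that is, exactly when $\cpq\neq1$.
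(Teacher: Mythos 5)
Your proposal is correct and follows essentially the same route as the paper: lower-bound the round-trip error by the maximum of the forward error and the direct recovery error via the triangle inequality and monotonicity of the Bures distance, then use the second-order (Rayleigh-Normal) asymptotics of \cite{w.13:_asymp_class_locc_conver_its} to show that for $\cpq\neq 1$ no intermediate copy number $m_n$ can make both optimal fidelities tend to $1$, since the forward direction requires the $\sqrt n$-coefficient to go to $-\infty$ and the recovery (shifted only by $O(1)$ on that scale by the $\beta\sqrt n$ loss) requires it to go to $+\infty$; the matching achievability with loss $n^\gamma$, $\gamma>1/2$, is the paper's Lemma on achievability. Your use of monotonicity of the optimal fidelities in $m$ to handle arbitrary sequences is a slightly more explicit version of the paper's subsequence/expansion argument, but not a genuinely different method.
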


When $\cpq\neq 1$, Theorem \ref{HT3} implies that LOCC conversion
is reversible if and only if the order of the amount of loss is strictly greater than the square root of the initial number of copies. 
Recall that $o(n)$ in the conversion $\psi^{\otimes (n+o(n))}\rightarrow \phi^{\otimes (S_\psi/S_\phi n+o(n))}\rightarrow \psi^{\otimes (n+o(n))}$ is naively considered in the context of the conventional asymptotic reversibility as was mentioned at the end of Sec. \ref{noloss-sec0}.
Now, we correct it as the conversion $\psi^{\otimes n}\rightarrow \phi^{\otimes (S_\psi/S_\phi n-O(n^{\gamma}))}\rightarrow \psi^{\otimes (n-n^{\gamma})}$ is asymptotically possible if and only if $\gamma>1/2$.
Such difference is effective as $n$ is not so large, which is realistic.
However, note that the significant condition $\cpq =1$ never occurred in the conventional way.
We can make use of this criteria to safely preserve the entanglement by making $C_{\psi\otimes \psi', \phi}=1$ as was mentioned in Sec. \ref{noloss-sec0}.
\subsection{Adding a Supplemental Resource}\label{supplement_method}
As for the second method to overcome the irreversibility, we consider the strategy to use supplementary states.
This method works for LU conversion as well as LOCC conversion unlike the first method.
That is, instead of the original conversion $\psi\to \phi$, we consider the conversion 
$\psi\otimes\psi'\to \phi$ by adjusting $C_{\psi\otimes\psi', \phi}$ to $1$.
To satisfy the condition $C_{\psi\otimes\psi', \phi}=1$,
$\psi'$ needs to satisfy the condition
$S_{\psi'}=\frac{S_\phi}{V_\phi} V_{\psi'}+V_{\psi}(\frac{S_\phi}{V_\phi}-\frac{S_\psi}{V_\psi})$.
The entropy $S_{\psi'}$ needs to be larger than $ \frac{S_\phi}{V_\phi} V_{\psi'}$
when $S_\psi/V_\psi < S_\phi/V_\phi$,
nevertheless it needs to be smaller than $ \frac{S_\phi}{V_\phi} V_{\psi'}$ when $S_\psi/V_\psi > S_\phi/V_\phi$.
That is, the ratio $\frac{S_{\psi'}}{V_{\psi'}}$ to be smaller than the ratio $\frac{S_\phi}{V_\phi}$
in the latter case.
In this case, the state $\psi'$ is further from the maximally entangled state.
So, such a state is usually thought to be useless because it has a weaker entanglement.
However, such a state is useful as a resource for the conversion 
$\psi\otimes\psi'\to \phi$ when $S_\psi/V_\psi > S_\phi/V_\phi$.
This fact brings new insight for the resource conversion theory.
In this application, a state $\psi'$ with a small ratio $\frac{S_{\psi'}}{V_{\psi'}}$ can be used for a supplemental resource dependently of the two states
$\psi$ and $\phi$.

Moreover, by modifying this application, any pure entangled state can be universally used for a supplemental resource for state conversion in the qubit system as follows.

\begin{Thm}\label{assist}
Let $\hH_A=\hH_B=\comp^2$. 
For an arbitrary non-maximally pure entangled state $\omega$ of $\hH_A\otimes\hH_B$,
there exists sufficiently large number $K$ such that if $k\geq K$, $\omega^{\otimes k}$ works as a universal resource for the asymptotic reversibility in the following sense.
For arbitrary two pure entangled states $\psi$ and $\phi$ of $\hH_A\otimes\hH_B$, 
there exists a pure state $\omega'$ of $\hH_A\otimes\hH_B$ 
 such that
 \begin{align}
  C_{\psi\otimes\omega^{\otimes k}, \phi\otimes\omega'^{\otimes k}}=1\label{supplement0}
 \end{align}
 holds.
\end{Thm}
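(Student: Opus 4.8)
The plan is to reduce \eqref{supplement0} to a one-variable existence problem and to solve it by perturbing $\omega'$ around $\omega$ once $k$ is large. First I would parametrize a pure state of $\comp^2\otimes\comp^2$ by the smaller squared Schmidt coefficient $q\in(0,1/2]$ and record that the two relevant functionals are
\begin{align}
S(q)&=-q\log q-(1-q)\log(1-q),\\
V(q)&=q(1-q)\Big(\log\tfrac{1-q}{q}\Big)^2,
\end{align}
together with the additivity $S_{\mu\otimes\nu}=S_\mu+S_\nu$ and $V_{\mu\otimes\nu}=V_\mu+V_\nu$ (variances of independent $-\log\lambda$ add). Since $C_{\alpha,\beta}=1$ is equivalent to $S_\alpha/V_\alpha=S_\beta/V_\beta$, writing $\omega'$ for the state with parameter $q'$, condition \eqref{supplement0} becomes
\begin{equation}
\frac{S_\psi+kS_\omega}{V_\psi+kV_\omega}=\frac{S_\phi+kS(q')}{V_\phi+kV(q')}.
\end{equation}
Setting $L_k:=(S_\psi+kS_\omega)/(V_\psi+kV_\omega)$ and $\eta_k:=(L_kV_\phi-S_\phi)/k$, this is the same as $g_k(q')=\eta_k$, where $g_k(q'):=S(q')-L_kV(q')$. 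Because $\psi,\phi$ range over a \emph{bounded} set of $(S,V)$-data, one checks $|L_k-r_\omega|\le C/k$ and $|\eta_k|\le C'/k$ with $r_\omega:=S_\omega/V_\omega$ and $C,C'$ independent of $\psi,\phi$; this uniformity is what makes the threshold $K$ work simultaneously for all $\psi,\phi$, matching the quantifier order of the statement.

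The heart of the argument is the following monotonicity lemma, which I expect to be the main obstacle: the ratio $R(q):=S(q)/V(q)$ is \emph{strictly increasing} on $(0,1/2)$, with $R(q)\to0$ as $q\to0^+$ and $R(q)\to\infty$ as $q\to1/2^-$. Using $S'(q)=\log\frac{1-q}{q}$ and $V(q)=q(1-q)S'(q)^2$, a direct computation gives $V'(q)=(1-2q)S'(q)^2-2S'(q)$, so that, up to the positive factor $S'(q)$,
\begin{equation}
\operatorname{sgn}R'(q)=\operatorname{sgn}\big(q(1-q)S'(q)^2-S(q)(1-2q)S'(q)+2S(q)\big).
\end{equation}
Thus monotonicity reduces to the positivity of this explicit expression on $(0,1/2)$. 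The delicate region is $q\to0^+$, where $S(q)\sim q\log\frac1q$, $S'(q)\sim\log\frac1q$, and the three terms nearly cancel to leading order; I would settle the sign there by a careful asymptotic expansion and control the remaining compact range by elementary estimates (the endpoint behaviour $R\to0,\infty$ being immediate from the displayed asymptotics).

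Granting the lemma, the existence of $\omega'$ follows by a perturbed intermediate-value argument localized near $q=q_\omega$, the parameter of the given $\omega$. Write $g_\infty(q):=S(q)-r_\omega V(q)=V(q)\big(R(q)-r_\omega\big)$; since $R(q_\omega)=r_\omega$ and $R$ is strictly increasing, $g_\infty$ is strictly negative just left of $q_\omega$ and strictly positive just right of it. Fix $\delta>0$ with $[q_\omega-\delta,q_\omega+\delta]\subset(0,1/2)$ and set $c:=\min\{-g_\infty(q_\omega-\delta),\,g_\infty(q_\omega+\delta)\}>0$, a quantity depending only on $\omega$. From $g_k(q)-g_\infty(q)=-(L_k-r_\omega)V(q)$ and the uniform bounds above, there is a $K$ depending only on $\omega$ such that for all $k\ge K$ and all $\psi,\phi$ one has $g_k(q_\omega-\delta)<-c/2<\eta_k<c/2<g_k(q_\omega+\delta)$. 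By continuity of $g_k$ and the intermediate value theorem there is $q_k\in(q_\omega-\delta,q_\omega+\delta)\subset(0,1/2)$ with $g_k(q_k)=\eta_k$; taking $\omega'$ to be the qubit pure state with parameter $q_k$ yields \eqref{supplement0}. Since $q_k$ stays in the open interval, $\omega'$ is a genuine (non-maximally) entangled state, and the same $K$ serves every pair $\psi,\phi$, which is exactly the universal-resource statement.
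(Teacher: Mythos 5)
Your reduction is correct: additivity of $S$ and $V$ under tensor products turns \eqref{supplement0} into the scalar equation $g_k(q')=\eta_k$, and your uniform $O(1/k)$ bounds on $L_k-r_\omega$ and $\eta_k$ (valid because $(S_\phi,V_\phi)$ ranges over a bounded set and $V_\omega>0$ for non-maximally entangled $\omega$) correctly capture why a single $K$ can work for all $\psi,\phi$, matching the quantifier order of the theorem. The skeleton — strict monotonicity of $S/V$ in the Schmidt parameter plus the intermediate value theorem — is the same as in Appendix \ref{Proof_assist}, but your implementation differs in a useful way: you run the IVT on a small interval $[q_\omega-\delta,q_\omega+\delta]$ around $\omega$ itself and absorb all $\psi,\phi$-dependence into an $O(1/k)$ perturbation, whereas the paper fixes the endpoints $x_r$ and $1/2$ and explicitly optimizes $f_r(x,p,q,k)$ over $p,q$. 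The paper's route is heavier but yields the concrete thresholds $K_s(r)=\max\{K_1(r),K_2(r)\}$ and the complementary necessity bound $K_0(r)$ (Propositions \ref{Ksuf} and \ref{Kneclem}); yours proves the existence statement of Theorem \ref{assist} more economically but produces no explicit $K$.

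The one substantive item you have not actually proved is the monotonicity lemma, which you rightly identify as the crux. Your derivative computation is correct, and the positivity you need, $q(1-q)S'(q)^2-S(q)(1-2q)S'(q)+2S(q)>0$ on $(0,1/2)$, is exactly the inequality of Lemma \ref{DG} (with $h=S$, $v(q)=q(1-q)S'(q)^2$, $g(q)=(1-2q)S'(q)$), whose consequence is Lemma \ref{Dmono}, the reciprocal form of your claim that $S/V$ is strictly increasing. The paper settles it not by asymptotics-plus-compactness but by the algebraic identity $v+2h-hg=h^2+2h-\log q\,\log(1-q)$ together with the elementary fact that $h(q)-\log q\,\log(1-q)$ increases from $0$. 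Your plan (an expansion near $q=0$, where the leading terms cancel and the expression behaves like $q\log(1/q)>0$, plus estimates elsewhere) is viable, but ``control the remaining compact range by elementary estimates'' must be an actual argument rather than an appeal to continuity; substituting the paper's identity closes this gap cleanly.
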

This theorem is shown in Appendix \ref{Proof_assist}.
In this modified scheme, we prepare $k$ copies of the pure entangled state $\omega$
and obtain a byproduct state $\omega'^{\otimes k}$ in this conversion
because we consider the conversion $\psi\otimes\omega^{\otimes k}\to \phi\otimes\omega'^{\otimes k}$
instead of the original conversion $\psi\to\phi$.
Theorem \ref{assist} states that 
any non-maximally pure entangled state $\omega$ of $\hH_A\otimes\hH_B$
can be used as a supplemental resource for state conversion 
for any two states $\psi$ and $\phi$.
Here, we need to choose the number of copies $K$ sufficiently large,
however, the number $K$ does not depend on either the state $\psi$ or the state $\phi$.
Hence, $k$ copies of $\omega$ universally works for this kind of state conversion, i.e.,
the state $\omega^{\otimes k}$ can be regarded as a universal resource.
\begin{figure}[!t]
\centering
 \includegraphics[clip ,width=2.8in]{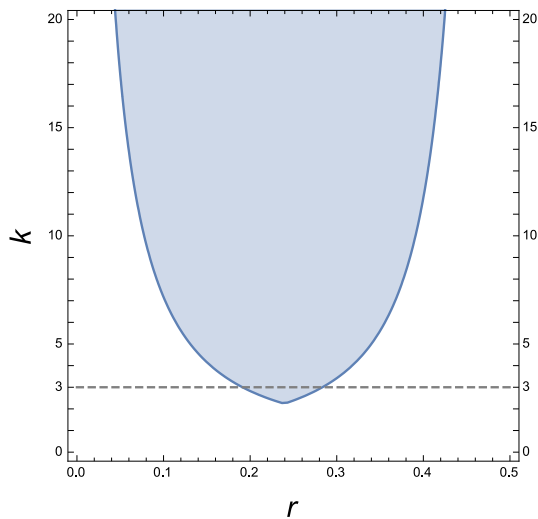}
\caption{The region where $k\geq K_s(r)$ is indicated as the blue region. This region gives how large $k$ is sufficient for each $\omega$ corresponding to distribution $(r,1-r)$ so that $\omega^{\otimes k}$ is a universal resource for asymptotic reversibility. Especially, we find that $k=3$ is sufficient for $r=1/4$.}
\label{figure21}
\end{figure}
\begin{figure}[!t]
\centering
 \includegraphics[clip ,width=2.8in]{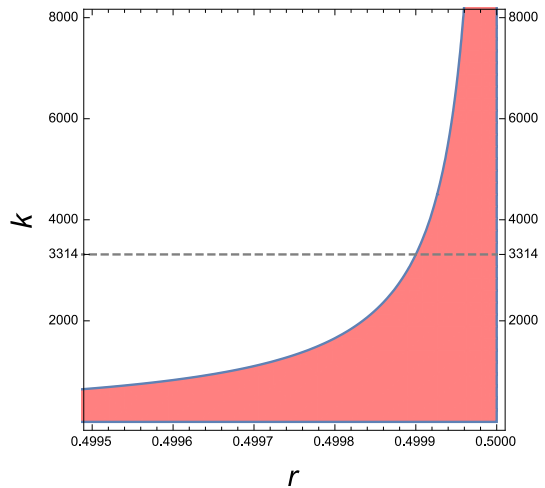}
\caption{The region where $k< K_0(r)$ is indicated as the red region. For each $\omega$ corresponding to distribution $(r,1-r)$, if $(r,k)$ is in this red region, $\omega^{\otimes k}$ is never a universal resource for asymptotic reversibility. In other words, $(r,k)$ must be in the white region at least. Especially, we find that $k=3313$ is needed at least for $r=0.4999$.}
\label{figure22}
\end{figure}

Indeed, a sufficient size $K_s$ and a necessary size $K_0$ for $K$ are given as concrete functions of $\omega$ in Proposition \ref{Ksuf} and Proposition \ref{Kneclem} in Appendix \ref{Proof_assist}, respectively.
Their respective numerical calculations are plotted in Fig. \ref{figure21} and \ref{figure22}.
For example, when the state $\omega_1$ has the squared Schmidt coefficients $(1/4,3/4)$,
$K=3$ is sufficient to be universal due to the calculation of $K_s(1/4)$ as indicated in Fig. \ref{figure21}.
As contrast, when the state $\omega_2$ has the squared Schmidt coefficients $(0.4999,0.5001)$, which is close to the maximally entangled,
$K\geq 3314$ is needed at least due to the calculation of $K_0(0.4999)$ as indicated in Fig. \ref{figure22}.
Thus, interestingly, such a relatively weakly entangled state $\omega_1$ is much better as a supplemental resource for the asymptotic reversibility than a strongly entangled state $\omega_2$ close to the maximally entangled state.
That is, when we use a pure entangled state as a supplemental resource,
the required property is opposite to that for a conventional resource.

\section{Conclusion}\label{conclusion-sec}
\subsection{Summary}
We have addressed the asymptotic compatibility between LOCC conversion and recovery
between two arbitrary bipartite pure entangled states.
We have introduced the MCRE in terms of the Bures distance in order to evaluate the errors of conversion and recovery operations simultaneously, 
and derived the necessary and sufficient condition for their asymptotic compatibleness in Theorem \ref{noloss}. 
Consequently, we have found that LOCC conversion is asymptotically reversible if and only if the conversion characteristics satisfies $\cpq = 1$.
This result suggests $S_\psi/V_\psi=S_\phi/V_\phi$ as a new
kind of asymptotic equivalence relation between pure states.

Moreover, except for lattice case, local unitary operation is found to be enough to realize asymptotically perfect conversion when $\cpq = 1$, and the asymptotic error is small if $\cpq$ is close enough to $1$.
Thus, multiple copies of an arbitrary pure entangled state $\psi$ can be asymptotically stored by LU conversion as the optimal number of copies of another pure entangled state $\phi$ as long as $\cpq = 1$.
Notably, there exists a non-trivial example of the case of $\cpq = 1$ indeed, which means the following interesting fact.
Even when any number of copies of the original state can never be locally unitarily equivalent with any
finite numbers of copies of the target state, 
copies of the original state can be asymptotically converted to
copies of the target state with local unitary operation in the suitable conversion rate,
which is quite different from the situation with finite
$n$.

Then, we have established two methods to realize the asymptotic reversibility of LOCC when it is impossible originally, i.e.,
when $\cpq \neq 1$.
As for the first way, we have considered the strategy to permit some loss in the recovery process.
As a result, it turns out that LOCC conversion and recovery become asymptotically compatible
if and only if the order of the amount of loss is strictly
greater than the square root of the initial number of copies of $\psi$,
as is mentioned in Theorem \ref{HT3}.
Remember that the conventional view \cite{jonathan99:_minim_condit_local_pure_state_entan_manip}\cite{vidal01:_irrev_asymp_manip_entan}\cite{yang05:_irrev_all_bound_entan_states}\cite{acin03}\cite{bennett00:_exact} expects that LOCC conversion is always asymptotically reversible.
Our result rigorously correct this conventional view by the following modification;
When the order $\gamma$ of allowed loss is larger than $1/2$ in the recovered number, LOCC conversion is always asymptotically reversible.
However, under the condition $\cpq=1$, we do not need such a modification because the asymptotic reversibility holds without any loss.
However, this first method does not work for LU conversion, while the following second method does.

The second way is to add an appropriate supplementary state.
The general strategy is to consider the conversion $\psi\otimes\psi'\to \phi$ so that $C_{\psi\otimes\psi', \phi}$ becomes $1$.
In this way, when $S_\psi/V_\psi > S_\phi/V_\phi$,
the ratio $\frac{S_{\psi'}}{V_{\psi'}}$ must be smaller than the ratio $\frac{S_\phi}{V_\phi}$.
Such a state $\psi'$ is further from the maximally entangled state, which
is usually thought to be useless because it has a weaker entanglement.
This fact brings new insight for the resource conversion theory, a utility of far-from-maximally entangled states as a resource.
Although a state $\psi'$
depends on the initial state $\psi$ and the target state $\phi$ in this application,
by modifying this application, any pure entangled state can be universally used for a supplemental resource 
for state conversion in the qubit system
as we have established in Theorem \ref{assist}.
In this modified scheme, we convert $\psi^{\otimes n}$ with $nk$ copies of a supplementary pure entangled state $\omega$
and obtain the target $\phi^{\otimes m_n}$ together with a byproduct state $\omega'^{\otimes m_nk}$ so that this conversion becomes asymptotically reversible.
Theorem \ref{assist} states that for
any non-maximally pure entangled state $\omega$ of the qubit system,
$\omega^{\otimes k}$ can be used as a supplemental resource for asymptotic conversion 
for any two states $\psi$ and $\phi$
when the number of copies $k\geq K$ sufficiently large.
Since sufficient size $K$ can be taken independently of both states $\psi$ and $\phi$,
the state $\omega^{\otimes k}$ can be regarded as a universal resource.
From concrete evaluations of a sufficient size $K_s$ and a necessary size $K_0$ given in Appendix \ref{Proof_assist}, 
we have found that 
the partially entangled state
$\omega$ with squared Schmidt coefficients $(1/4,3/4)$ is much better than nearly maximally entangled states 
as a supplemental resource,
as indicated in Figs. \ref{figure21} and \ref{figure22}.
This fact also indicates a usefulness of far-from-maximally entangled states as a resource.
It is still open to verify whether or not this universality property of any non-maximally entangled states holds in general
dimensional systems.
\subsection{Future Works}
It remains a future problem to exactly formulate the trade-off relation 
between the conversion and the recoverability
for general LOCC conversions 
because we have solved it only for LU conversion.
Based on our analysis, it is not difficult to show that
the error can be improved when the order of the amount of loss is
in the square root of the initial number of copies.
However, it is quite difficult to derive the amount of this improvement.
For entanglement concentration and dilution, this problem has been already solved because the problem is reduced to the individual optimizations of entanglement concentration and dilution \cite{kumagai13:_entan_concen_irrev}.
 However, when the initial or the target state is not 
maximally entangled, 
the problem is not so easy because it seems necessary to consider the optimization of the sum of the errors involving common possible LOCC operations.

The result about LU conversion is useful when we need perfect reversibility.
For example, we can hide entangled state behind the environment so as to secret message encoded in shared entangled state.
Applications of perfect reversibility including this example are expected as future works.

We should note that 
the treatment of the lattice case has a peculiar difficulty
and the proof method for the non-lattice case cannot be directly applied for the lattice case.
In particular, 
although the limit formula (\ref{limfid}) of the fidelity was obtained in the non-lattice case,
it is open whether a similar result holds for the lattice case.
In the lattice case, 
there is room for possibility that there exists an asymptotically reversible conversion with a classical communication, differently from the finite setting.

Moreover, it is important to analyze the LOCC conversion and its reversibility in a finite-length setting for utility, though only asymptotic analysis is treated in this paper.
 For entanglement concentration and dilution, the analysis has been also done and its numerical result has been derived in \cite{kumagai13:_entan_concen_irrev}, but for the general case, it seems to require more ingenuity on account of the same reason.
In fact, even if $\cpq =1$,
the minimum sum of both errors under general LOCC conversion is not zero with finite $n$,
although it should be smaller than the minimum sum under LU conversion.
Since the limit has been shown to be zero under both of them,
we are interested in the convergence speed of the minimum sum.
However, it is also an open problem to clarify 
the asymptotic behavior.

Although our problem is essentially described in terms of probability distributions, this problem for probability distributions is quite far from traditional information theory.
Hence, the conventional method in information theory cannot be applied.
To resolve this problem, we employed two methods.
One is the method of Rayleigh-Normal distribution, which was recently invented in the paper \cite{w.13:_asymp_class_locc_conver_its}.
This method has been used for Theorem \ref{noloss}.
The other is the strong large deviation by Bahadur-Rao \cite{bahadur60:_deviat_sampl_mean}, which brings us a more detailed analysis than conventional large deviation method in information theory.
Indeed, to show Theorem \ref{main}, we need to treat the small difference between two probabilities that are close to each other.
Such a subtle difference has never been addressed in information theory as well as in quantum information theory.
To handle such a subtle difference, we need such a higher order analysis than conventional large deviation.
Maybe, there are some difficult open problems that require such a detailed analysis.
Hence, we can expect that our method can be more widely applied for related areas, e.g., classical information theory, statistical physics.

\begin{acknowledgments}
WK acknowledges support from Grant-in-Aid for JSPS
Fellows No. 233283. MH is partially supported by a MEXT
Grant-in-Aid for Scientific Research (A) No. 23246071 and the
National Institute of Information and Communication Technology
(NICT), Japan. 
The Centre for Quantum Technologies is funded by the
Singapore Ministry of Education and the National Research Foundation
as part of the Research Centres of Excellence programme.

\end{acknowledgments}

\appendix

\section{Majorization and LOCC}\label{defnotes}
We denote by $P_\psi$ the probability distribution composed of the squared Schmidt coefficient of
$\psi \in \hH_A\otimes\hH_B$.
The fidelity between two probability distributions $P$ and $Q$ 
over the same discrete set $\mathcal Y$ 
is defined
as
$
 F(P,Q) := \sum_{y \in \mathcal Y}\sqrt{P(y)}\sqrt{Q(y)}.
$
When the probability distributions $P$ and $Q$ over finite sets satisfy the following, we say that $P$ is majorized by $Q$ and write $P \prec Q$:
\begin{equation}
 \sum_{i=1}^{l}P^\downarrow(i) \leq \sum_{i=1}^{l}Q^\downarrow(i),
\end{equation}
for any $l$, where $P^\downarrow$ is the probability distribution over 
$\natn_{|\mathcal Y|} = \{1,\dots, |\mathcal Y|\}$ 
such that $P^\downarrow(i)$ is the $i$-th largest element of $P$.
Nielsen's theorem states that $\psi$ is transformed to $\psi'$ by LOCC with probability $1$ if and only if $P_\psi\prec P_{\psi'}$ \cite{nielsen99:_condit_class_entan_trans}.
From this theorem we have the following lemma by making use of the concavity of the fidelity.
\begin{Lem}[Vidal et al. \cite{vidal00:_approx}]\label{vidalem}
Let $\psi$ and $\phi$ be pure bipartite entangled states with Schmidt coefficients $\sqrt{P_\psi(j)}$ and $\sqrt{P_\phi(j)}$.
The following relations hold.
  \begin{equation}
\max_{U_A, U_B : \text{unitary}}F((U_A\otimes U_B)\psi, \phi) = F(P_\psi^\downarrow, P_\phi^\downarrow)
  \end{equation}
and
\begin{equation}
 \max_{\Gamma:\text{LOCC}}F(\Gamma(\psi),\phi) = \max_{P_\psi\prec P'}F(P',P_\phi).
\end{equation} 
\end{Lem}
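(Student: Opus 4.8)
The plan is to prove the two identities separately, in each case reducing the quantum optimization to a statement about singular values.

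For the first identity I would use the matrix representation of the bipartite states. Writing $\ket{\psi}=\sum_{a,b}\Psi_{ab}\ket{a}_A\ket{b}_B$ and $\ket{\phi}=\sum_{a,b}\Phi_{ab}\ket{a}_A\ket{b}_B$ in fixed product bases, a local unitary acts as $\Psi\mapsto U_A\Psi U_B^T$, so that $\inn{\phi}{(U_A\otimes U_B)\psi}=\tr(\Phi^\dagger U_A\Psi U_B^T)$. The singular values of $\Psi$ and $\Phi$ are exactly the Schmidt coefficients $\sqrt{P_\psi^\downarrow(i)}$ and $\sqrt{P_\phi^\downarrow(i)}$. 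As $U_A,U_B$ range over all unitaries, $U_A\Psi U_B^T$ ranges over all matrices sharing the singular values of $\Psi$, so by the tight form of von Neumann's trace inequality the maximum of $|\tr(\Phi^\dagger X)|$ over such $X$ equals $\sum_i\sqrt{P_\phi^\downarrow(i)}\sqrt{P_\psi^\downarrow(i)}=F(P_\psi^\downarrow,P_\phi^\downarrow)$, attained by aligning the two Schmidt bases in the common descending order. This gives the first relation.

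For the second identity, the lower bound is the easy direction: given any $P'$ with $P_\psi\prec P'$, Nielsen's theorem supplies a deterministic LOCC map sending $\psi$ to a pure state with squared Schmidt coefficients $P'$, and post-composing with the aligning local unitary from the first part realizes fidelity $F(P',P_\phi)$; maximizing over feasible $P'$ gives ``$\geq$''. For the upper bound I would represent an arbitrary LOCC map on the pure input as an ensemble $\{p_i,\psi_i\}$ of pure outputs. Since $\phi$ is pure, $F(\Gamma(\psi),\phi)^2=\sum_i p_i\,|\inn{\phi}{\psi_i}|^2\leq\sum_i p_i\,F(P_{\psi_i}^\downarrow,P_\phi^\downarrow)^2$ by the first part applied branchwise, while the admissible ensembles are precisely those obeying the majorization constraint $P_\psi\prec\sum_i p_i P_{\psi_i}^\downarrow$.

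The main obstacle is this converse. One cannot bound the branches individually, because a lucky measurement outcome can produce a state strictly more entangled than $\psi$ (so $P_{\psi_i}$ need not majorize $P_\psi$), and the naive ``root-mean-square $\leq$ maximum'' estimate then loses too much. The point to establish is therefore that averaging under the constraint $P_\psi\prec\sum_i p_i P_{\psi_i}^\downarrow$ cannot beat the best single feasible distribution: I would argue that the branchwise objective, viewed as a function of the sorted Schmidt vector, is concave along the relevant direction, so that Jensen's inequality collapses the ensemble optimum onto a deterministic target lying in $\{P':P_\psi\prec P'\}$, matching the lower bound. Verifying this concavity on the majorization polytope — equivalently, showing that probabilistic branching confers no advantage for the fidelity — is the technical heart; the first identity and the achievability part are routine by comparison.
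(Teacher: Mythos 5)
Your first identity is fine: the reduction to $\operatorname{Tr}(\Phi^\dagger U_A\Psi U_B^T)$ and von Neumann's trace inequality is a complete and standard argument. Note, however, that the paper does not actually prove this lemma at all --- it imports it from Vidal et al.\ \cite{vidal00:_approx} with only the remark that it follows from Nielsen's theorem ``by making use of the concavity of the fidelity'' --- so you are being judged against the cited source rather than against an in-paper proof.

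The converse of the second identity has a genuine gap, and it is not only that you leave the key step unverified: the concavity statement you propose to verify is the wrong one and would fail. With the paper's (unsquared) fidelity convention, a pure target gives $F(\Gamma(\psi),\phi)^2=\sum_i p_i|\inn{\phi}{\psi_i}|^2\le\sum_i p_i F(P_{\psi_i}^\downarrow,P_\phi^\downarrow)^2$, so the branchwise objective you must average is the \emph{square} of the Bhattacharyya overlap. That square is not concave in the sorted Schmidt vector (for $f\ge 0$ concave, $(f^2)''=2(f')^2+2ff''$ carries a positive rank-one term), so Jensen applied to it under the constraint $P_\psi\prec\sum_i p_i P_{\psi_i}^\downarrow$ goes the wrong way; and the crude bound $\sum_i p_iF_i^2\le\sum_i p_iF_i\le F\left(\sum_i p_iP_{\psi_i}^\downarrow,P_\phi^\downarrow\right)$ only yields the square root of what is needed. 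The argument that actually closes this (and is the content of the cited result) replaces the branchwise bound by the deterministic-optimum envelope $f^*(\psi_i):=\max_{P_{\psi_i}\prec P''}F(P'',P_\phi)^2$ and proves that $f^*$, written as a function of the Vidal monotones $E_k(\psi_i)=\sum_{j\ge k}P_{\psi_i}^\downarrow(j)$, is concave and non-decreasing in each $E_k$; combined with $\sum_i p_iE_k(\psi_i)\le E_k(\psi)$ (the LOCC monotonicity of the $E_k$, which is the strengthening of Nielsen's theorem you implicitly invoke when you say the admissible ensembles are ``precisely'' those with $P_\psi\prec\sum_ip_iP_{\psi_i}^\downarrow$), this gives $\sum_ip_if^*(\psi_i)\le f^*(\psi)$ and hence the converse. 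Establishing that concavity/monotonicity requires the explicit characterization of the optimal $P''$ and is the real work; as written, your proposal reduces the lemma to this unproved claim, and in a form that is literally false.
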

This lemma implies that LOCC $\Gamma$ is better if $\Gamma(\psi)$ is pure and the Schmidt basis of $\Gamma(\psi)$ is the same as that of $\phi$.

\section{Derivation of the Necessary and Sufficient Condition for the Asymptotic Reversibility of LOCC}\label{sa2}
In this appendix,
we show the Necessary and Sufficient Condition for the Asymptotic Reversibility of LOCC given in Theorem \ref{noloss}.
\subsection{Incompatibility when $\cpq\neq 1$}\label{onlyif}
As for a proof of ``only if'' part of Theorem \ref{noloss}, we prove that $\lim_{n\rightarrow\infty}\delta_n(\psi,\phi)\neq 0$ if $\cpq\neq 1$.
It is well known that deterministic LOCC convertibility from a pure state to another pure state is characterized by their majorization order \cite{nielsen99:_condit_class_entan_trans}.
Majorization is a preorder which compares randomness characterized by the Schmidt coefficients (see Appendix \ref{defnotes} for the detail).
Thus it is enough to see only their Schmidt coefficients, 
which reduces our analysis to conversion of probability distributions, 
as indicated by Lemma \ref{vidalem} in Appendix \ref{defnotes} in the following way.
\begin{equation}
 \min_{C:\text{LOCC}}B(C(\psi),\phi) =
 \sqrt{1-\max_{P_\psi\prec P'}F(P',P_\phi)}.\label{Buresfid}
\end{equation}
Moreover, applying the triangle inequality and the monotonicity of the Bures distance,
we find that it is sufficient to deal with LOCC conversion from $\phi^{\otimes m}$ to $\psi^{\otimes n}$ instead of the true recovery LOCC conversion from $C(\psi^{\otimes n})$ to $\psi^{\otimes n}$ as follows.
We have
\begin{align}
& \min_{C,D:\text{LOCC}}B(C(\psi^{\otimes n}),\phi^{\otimes m})+B(\psi^{\otimes n},D\circ C(\psi^{\otimes n}))\nonumber\\
\geq& \min_{C,D:\text{LOCC}}B(D\circ C(\psi^{\otimes n}),D(\phi^{\otimes m}))+\nonumber\\
&\hspace{140pt} B(\psi^{\otimes n},D\circ C(\psi^{\otimes n}))\nonumber\\
\geq& \min_{D:\text{LOCC}}B(\psi^{\otimes n},D(\phi^{\otimes m})),\label{Bureslow}
\end{align}
which implies that
\begin{align}
 &\delta_n(\psi,\phi)\nonumber\\
\geq& \min_{m\in\natn}\max\left\{\min_{C:\text{LOCC}}B(C\right.(\psi^{\otimes n}),\phi^{\otimes m}),\nonumber\\
&\hspace{100pt}\left.\min_{D:\text{LOCC}}B(\psi^{\otimes n},D(\phi^{\otimes m}))\right\}.\label{MCRElow}
\end{align}
Therefore, (\ref{Buresfid}) implies that it is enough to show that
 \begin{align}
 \lim_{n\rightarrow\infty}\max_{m\in\natn}\min\{\max_{P_\psi^n\prec P'}F(P',P_\phi^{m}),\max_{P_\phi^m\prec P''}F(P'',P_\psi^n)\}\nonumber\\
 < 1,\label{fidmax}
 \end{align}
when $\cpq \neq 1$. To prove this, we focus on the maximizer $m_n$ with respect to $m$ in the left hand side of (\ref{fidmax}),
and expand it as $m_n = S_\psi/S_\phi n + b n^{\gamma} + o(n^{\gamma})$.

When $\gamma>1/2$, Theorem 10 in \cite{w.13:_asymp_class_locc_conver_its}
implies that
\begin{align}
 \lim_{n\rightarrow\infty}\max_{P_\psi^n\prec P'}F(P',P_\phi^{S_\psi/S_\phi n + b n^{\gamma} + o(n^{\gamma})}) &=& 0 \quad\left(b>0\right),\nonumber\\
&&\label{ag}\\
\lim_{n\rightarrow\infty}\max_{P_\phi^{S_\psi/S_\phi n + b n^{\gamma} + o(n^{\gamma})}\prec P''}F(P'',P_\psi^n)&=&0\quad\left(b < 0\right).\nonumber\\
&&\label{al}
\end{align}
These mean that when $b<0$ and $\gamma>1/2$, the conversion is possible since the target number of copies is small, however the recovery is no longer impossible because of fatal lost of number of copies, and vice versa.
Hence, (\ref{fidmax}) holds if $b \neq 0$ and $\gamma>1/2$.

On the other hand, the case when $\gamma=1/2$ requires more delicate analysis because 
the reversibility is just determined by $\cpq$.
It is so-called second order analysis.
We apply Theorem 10 and Lemma 11 in \cite{w.13:_asymp_class_locc_conver_its} to the case of $\cpq \neq 1$.
Then, using the Rayleigh-Normal distribution $Z_{v}(\mu)$ \cite{w.13:_asymp_class_locc_conver_its}, we can describe the 
limit of the maximum of the fidelity as
\begin{align}
&\lim_{n\rightarrow\infty}\max_{P_\psi^n\prec P'}F(P',P_\phi^{S_\psi/S_\phi n + b\sqrt{n} + o(\sqrt{n})}) \nonumber\\
&= \sqrt{1-Z_{\cpq}\left(b\frac{S_\phi}{\sqrt{V_\psi}}\right)}<1 \label{Fpq}
\end{align}
for any $b\in\real$.
Overall, we have shown (\ref{fidmax}) when $\cpq\neq 1$.

\subsection{Achievability when $\cpq=1$}\label{proof-main-achieve}
Now, we construct achievable conversions with $m_n=S_\psi/S_\phi n + o(\sqrt{n})$ when $\cpq=1$ by using the deterministic conversion as follows.
For a map $W : \mathcal{X} \rightarrow \mathcal{Y}$,
we define the deterministic conversion map from 
a probability distribution over $\mathcal{X}$
to another probability distribution over $\mathcal{Y}$ by using the same symbol $W$ as follows.
\begin{equation}
W(P)(y) := \sum_{x \in W^{-1}(y)}P(x).
\end{equation}
The following lemma implies the possibility of the desired conversions.
\begin{Lem}[Kumagai and Hayashi \cite{w.13:_asymp_class_locc_conver_its}]
For probability distributions $P_1$ over $\mathcal{X}$ and $P_2$ over $\mathcal{Y}$, we have
 \begin{align}
  \lim_{n\rightarrow\infty}\max\{F(W(P_1^n),P_2^{m_n})|W:\mathcal{X}\rightarrow \mathcal{Y}\}=1
 \end{align}
when $C_{P_1,P_2}=1$ and $m_n=S(P_1)n/S(P_2)+o(\sqrt{n})$,
where $S(P_i)$ denotes the Shannon entropy of $P_i$, and $C_{P_1,P_2}:=(S(P_1)/V(P_1))(S(P_2)/V(P_2))^{-1}$ where $V(P_i)=\sum_x P_i(x)(-\log P_i(x)-S(P_i))^2$.
\end{Lem}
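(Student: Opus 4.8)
The plan is to prove the equality by bounding $\max\{F(W(P_1^n),P_2^{m_n})\mid W\}$ from both sides. The upper bound is trivial since $F\le 1$, so the work is entirely in the lower bound: I will exhibit a sequence of deterministic maps $W_n$ with $F(W_n(P_1^n),P_2^{m_n})\to 1$. Two reductions come first. Since $F$ is invariant under relabelling of either alphabet, it suffices to compare the sorted distributions $P_1^{n\downarrow}$ and $P_2^{m_n\downarrow}$. Moreover a deterministic map can only merge atoms: $W(P_1^n)$ is obtained by partitioning the atoms of $P_1^n$ into groups and summing within each group, so in particular $P_1^n\prec W(P_1^n)$. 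The natural candidate $W_n$ is therefore the monotone merging map that partitions the sorted atoms of $P_1^n$ into consecutive blocks, assigning the $j$-th block to the $j$-th largest atom of $P_2^{m_n}$ with block-mass as close to $P_2^{m_n\downarrow}(j)$ as the atomic granularity allows.

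The heart of the argument is a second-order comparison of the two distributions along the self-information axis. Writing $G_i(t)$ and $M_i(t)$ for the number of atoms and the total mass of $P_1^n$ (resp. $P_2^{m_n}$) with $-\log P(x)\le t$, the central limit theorem for $-\log P_1^n(X^n)$ gives $M_1(nS(P_1)+\sqrt n\,s)\to\Phi(s/\sqrt{V(P_1)})$, and the strong large deviation of Bahadur and Rao \cite{bahadur60:_deviat_sampl_mean} applied to the counting measure gives $\log G_1(nS(P_1)+\sqrt n\,s)=nS(P_1)+\sqrt n\,s+O(1)$, with the $O(1)$ term equal to $-s^2/(2V(P_1))$, and likewise for $P_2^{m_n}$. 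Because $m_n=S(P_1)n/S(P_2)+o(\sqrt n)$ we have $m_nS(P_2)=nS(P_1)+o(\sqrt n)$ and $\sqrt{m_nV(P_2)}=\sqrt{n\,(S(P_1)/S(P_2))V(P_2)}\,(1+o(1))$; the condition $C_{P_1,P_2}=1$ is exactly $S(P_1)/V(P_1)=S(P_2)/V(P_2)$, which forces $(S(P_1)/S(P_2))V(P_2)=V(P_1)$. Hence the normalized mass and counting profiles of $P_1^n$ and $P_2^{m_n}$ agree not only to leading exponential order but even in the $O(1)$ granularity term. In the typical window the monotone merging map then acts almost bijectively and order-preservingly, so its fidelity is asymptotically $F(P_1^{n\downarrow},P_2^{m_n\downarrow})$, while the atypical tails carry vanishing mass and contribute $o(1)$ to $F$.

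For the non-lattice case there is a shortcut worth recording. The construction above already shows $\max\{F(W(P_1^n),P_2^{m_n})\mid W\}\ge F(P_1^{n\downarrow},P_2^{m_n\downarrow})-o(1)$, the $o(1)$ absorbing the single-atom terms lost to the alphabet-size mismatch and to the tail merges. Applying Lemma \ref{mainlem} to $P_1,P_2$ at $b=0$ (which holds since $m_n=S(P_1)n/S(P_2)+o(\sqrt n)$) gives $F(P_1^{n\downarrow},P_2^{m_n\downarrow})\to\sqrt{2/(C_{P_1,P_2}^{1/2}+C_{P_1,P_2}^{-1/2})}$, which equals $1$ at $C_{P_1,P_2}=1$, closing the bound.

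The hard part will be that a deterministic map can only merge and never split, so the whole argument lives at the borderline $C_{P_1,P_2}=1$ where the two profiles agree only asymptotically and neither strictly majorizes the other; one must control the merging error at the $\sqrt n$ and $O(1)$ scales rather than at leading exponential order, which is exactly why the plain central limit theorem is insufficient and the Bahadur--Rao counting estimate is indispensable. A related subtlety is that the shortcut through Lemma \ref{mainlem} is unavailable in the lattice case, where the sorted fidelity need not converge to the clean formula; there the claim has to come from the direct construction, exploiting the extra flexibility that merging provides to smooth out the lattice granularity, and verifying that the tail and granularity corrections stay $o(1)$ uniformly in $n$.
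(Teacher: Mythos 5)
You should first note what the paper itself does with this statement: it offers no proof at all, but imports it from Kumagai--Hayashi \cite{w.13:_asymp_class_locc_conver_its}, whose argument runs at the level of cumulative masses via the Rayleigh--Normal family (the paper says explicitly that this method, not the strong large deviation, is what underlies Theorem \ref{noloss}), and which therefore needs no lattice/non-lattice dichotomy. Your non-lattice argument is correct and is essentially the paper's Lemma \ref{mainlem}/Theorem \ref{main} machinery in disguise: the monotone relabel-with-tail-merge map gives $\max_W F(W(P_1^n),P_2^{m_n})\geq F(P_1^{n\downarrow},P_2^{m_n\downarrow})$ exactly (the merged tail only increases the last term, since $P_2^{m_n}$ has no atoms beyond rank $|\mathcal{Y}|^{m_n}$), and Lemma \ref{mainlem} at $b=0$ with $C_{P_1,P_2}=1$ yields limit $\sqrt{2/(1+1)}\,e^{0}=1$. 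That part of the shortcut is clean and legitimate, since Lemma \ref{mainlem} is proved independently of the quoted lemma.

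The genuine gap is the lattice case, which the lemma must cover (it is applied in Appendix \ref{sa2} to arbitrary pure states, and lattice pairs with $\cpq=1$ do occur). Your direct construction rests on the Bahadur--Rao counting expansion with matching $O(1)$ granularity terms, but the strong large deviation in the form (\ref{6-1-2}) is valid only for non-lattice variables; for lattice ones the correction acquires an oscillatory factor, and the rank-matched log-ratio $\log\bigl(P_2^{m_n\downarrow}(j)/P_1^{n\downarrow}(j)\bigr)$ then oscillates by $\Theta(1)$ (the lattice spacing), precisely at the order you identify as the crux. Consequently the sorted fidelity is \emph{not} known to converge to $1$ in the lattice case --- the paper explicitly flags the lattice analogue of (\ref{limfid}) as open, and Theorem \ref{low-err} only bounds the fidelity from above --- so your reduction ``the merging map acts almost bijectively, hence its fidelity is asymptotically $F(P_1^{n\downarrow},P_2^{m_n\downarrow})$'' cannot close the claim: if that reduction were accurate, relabelling alone would suffice for lattice states, which is exactly what is not believed to hold and why genuine merging matters. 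What is missing is a concrete mechanism by which merging defeats the $\Theta(1)$ atomwise oscillation; merging consecutive blocks leaves a per-target-atom error of one source atom, which is itself $\Theta(1)$ relative to the target atom in this regime. The cited proof avoids the difficulty by matching cumulative (block) masses and invoking only CLT-level convergence of the integrated profiles, which holds for lattice distributions as well; without either reproducing such a CDF-level construction or proving a lattice version of the strong large deviation with controlled oscillation, your argument establishes the lemma only under the additional hypothesis that neither $P_1$ nor $P_2$ is a lattice distribution.
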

From this lemma, there exist two sequences of deterministic conversions $(W_n)_{n\in\natn}$ and $(X_n)_{n\in\natn}$ such that
\begin{align}
 \lim_{n\rightarrow\infty} F(W_n(P_\psi^{n}), P_\phi^{\frac{S_\psi}{S_\phi}n}) &=  \lim_{n\rightarrow\infty}F(X_n(P_\phi^{\frac{S_\psi}{S_\phi}n}),P_\psi^{n}) \nonumber\\
&= 1,\label{determ1}
\end{align}
when $\cpq = 1$.
Since 
$P_\psi^{n} \prec W_n(P_\psi^{n})$
and 
$ P_\phi^{\frac{S_\psi}{S_\phi}n} \prec
X_n(P_\phi^{\frac{S_\psi}{S_\phi}n})$,
there exist two sequences of LOCC operations $(C_n)_{n\in\natn}$ and $(D_n)_{n\in\natn}$ such that
\begin{align}
 B(\phi^{\otimes \frac{S_\psi}{S_\phi}n},C_n(\psi^{\otimes n}))^2 &= 1 - F(W_n(P_\psi^{n}), P_\phi^{\frac{S_\psi}{S_\phi}n}),\\
 B(\psi^{\otimes n},D_n(\phi^{\otimes \frac{S_\psi}{S_\phi}n}))^2 &= 1 - F(X_n(P_\phi^{\frac{S_\psi}{S_\phi}n}),P_\psi^{n}).
\end{align}
 Then, it follows from the triangle inequality and the monotonicity of the Bures distance again that
\begin{align}
& B(\psi^{\otimes n},D_n\circ C_n(\psi^{\otimes n}))\nonumber\\
\leq & B(\psi^{\otimes n},D_n(\phi^{\otimes \frac{S_\psi}{S_\phi}n})) + B(D_n(\phi^{\otimes \frac{S_\psi}{S_\phi}n}),D_n\circ C_n(\psi^{\otimes n}))\nonumber\\
\leq & B(\psi^{\otimes n},D_n(\phi^{\otimes \frac{S_\psi}{S_\phi}n})) + B(\phi^{\otimes \frac{S_\psi}{S_\phi}n},C_n(\psi^{\otimes n}))\nonumber\\
=& \sqrt{1-F(X_n(P_\phi^{\frac{S_\psi}{S_\phi}n}),P_\psi^{n})}+\sqrt{1-F(W_n(P_\psi^{n}), P_\phi^{\frac{S_\psi}{S_\phi}n})}\nonumber\\
&\rightarrow 0\qquad (n\rightarrow\infty).\label{BuresUp}
\end{align}
Thus,
\begin{align}
&\lim_{n\rightarrow\infty}\delta_n(\psi,\phi)\nonumber\\
\leq& \lim_{n\rightarrow\infty}B(\phi^{\otimes \frac{S_\psi}{S_\phi}n},C_n(\psi^{\otimes n})) + B(\psi^{\otimes n},D_n\circ C_n(\psi^{\otimes n}))\nonumber\\
=& 0.
\end{align}

\section{Detailed Asymptotic Analysis of the Fidelity concerning the LU Conversion}\label{proofmain}
Firstly, we show Theorem \ref{low-err} in Appendix \ref{sa31}.
Latter, we show Lemma \ref{mainlem} in Appendix \ref{sa32}.

\subsection{An Upper Bound of the Fidelity}\label{sa31}
To begin with, we establish the following upper bound of the fidelity which is valid even for the lattice distributions, which directly proves Theorem \ref{low-err} in Sec. \ref{LU-main}.
\begin{Lem}\label{upper-thm}
 The following holds.
\begin{align}
 &\lim_{n\rightarrow\infty}F(P_{\psi}^{n\downarrow},P_{\phi}^{m_n\downarrow})\nonumber\\
\leq&\left\{
\begin{array}{ll}
 \displaystyle
0\quad &\text{if } \frac{\left|m_n-\frac{S_{\psi}}{S_{\phi}}n\right|}{\sqrt{n}}\rightarrow\infty\\
\sqrt{\frac{2}{\cpq^{\frac{1}{2}}+\cpq^{-\frac{1}{2}}}}\exp \left[-\frac{b^2S_{\phi}^2}{4(1+\cpq)V_{\psi}}\right]\quad &\text{if  } \frac{m_n-\frac{S_{\psi}}{S_{\phi}}n}{\sqrt{n}}\rightarrow b.
\end{array}
\right.
\end{align}
\end{Lem}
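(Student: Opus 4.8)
The plan is to bound $F(P_\psi^{n\downarrow},P_\phi^{m_n\downarrow})=\sum_j\sqrt{P_\psi^{n\downarrow}(j)P_\phi^{m_n\downarrow}(j)}$ from above by the fidelity of two \emph{coarse-grained} distributions and then refine the grid. Writing $t_{\psi,n}(j)=(\log P_\psi^{n\downarrow}(j)+nS_\psi)/\sqrt n$, I fix a mesh $\Delta>0$, cut the index set into bins $I_k=\{j:t_{\psi,n}(j)\in[k\Delta,(k+1)\Delta)\}$, and apply Cauchy--Schwarz on each bin to get $\sum_{j\in I_k}\sqrt{P_\psi^{n\downarrow}(j)P_\phi^{m_n\downarrow}(j)}\le\sqrt{a_k^{(n)}\,b_k^{(n)}}$ with $a_k^{(n)}=P_\psi^{n\downarrow}(I_k)$, $b_k^{(n)}=P_\phi^{m_n\downarrow}(I_k)$, hence $F\le\sum_k\sqrt{a_k^{(n)}b_k^{(n)}}$. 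The point of this step is that it replaces the delicate within-bin structure of the two distributions (the part that is sensitive to the lattice/non-lattice distinction and would require a Bahadur--Rao local expansion) by two coarse masses whose limits are controlled by the \emph{ordinary} central limit theorem alone; this is exactly why the resulting bound remains valid for lattice distributions.

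The first mass is immediate. Under $P_\psi^{n\downarrow}$ the quantity $t_{\psi,n}$ is a centered, $\sqrt n$-normalized sum of i.i.d. terms $-\log P_\psi(\cdot)$, so the CLT gives $a_k^{(n)}\to a_k:=\Phi_{V_\psi}((k+1)\Delta)-\Phi_{V_\psi}(k\Delta)$, where $\Phi_V$ denotes the $N(0,V)$ distribution function; the analogous statement holds for $\phi$ with $m_n$ copies.

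The crux is $b_k^{(n)}$, the $\phi$-mass of a block of indices singled out by the $\psi$-ordering. I would handle it using only the leading counting estimate $\log|\{i:\log P_\psi^n(i)\ge-nS_\psi+t\sqrt n\}|=nS_\psi-t\sqrt n+o(\sqrt n)$, whose upper half is the trivial $\sum P\le1$ and whose lower half comes from applying the CLT to a thin slab of log-probabilities of width $\epsilon\sqrt n$ and then letting $\epsilon\to0$. Since $I_k$ is the block of ranks between the two $\psi$-counts at thresholds $(k+1)\Delta$ and $k\Delta$, and the cumulative $\phi$-mass of the top $r$ ranks tends to $1-\Phi_{V_\phi}(\xi)$ once $r=|\{i:\log P_\phi^{m_n}(i)\ge-m_nS_\phi+\xi\sqrt{m_n}\}|$, matching the two counts at equal rank — together with $m_nS_\phi=nS_\psi+bS_\phi\sqrt n+o(\sqrt n)$ and $\sqrt{m_n}=\sqrt{S_\psi/S_\phi}\,\sqrt n+o(\sqrt n)$ — identifies the $\phi$-quantile attached to the $\psi$-threshold $t$ as $\xi(t)=(t+bS_\phi)\sqrt{S_\phi/S_\psi}+o(1)$. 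This yields $b_k^{(n)}\to b_k:=\Phi_{V_\phi}(\xi((k+1)\Delta))-\Phi_{V_\phi}(\xi(k\Delta))$. I expect this cross-distribution bookkeeping — transporting $\psi$-defined bins to $\phi$-quantiles using only the leading exponential order of the two counting functions, while keeping every estimate at CLT level so as not to reintroduce lattice sensitivity — to be the main obstacle.

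Granting these limits, term-by-term passage (legitimate because the Gaussian limits are tight, so the $k$-tails are uniformly small via $\sum_{|k|>M}\sqrt{a_k^{(n)}b_k^{(n)}}\le(\sum_{|k|>M}a_k^{(n)})^{1/2}(\sum_{|k|>M}b_k^{(n)})^{1/2}$) gives $\limsup_nF\le\sum_k\sqrt{a_kb_k}$ for every $\Delta$. Because coarse-graining cannot decrease fidelity, $\sum_k\sqrt{a_kb_k}$ decreases as $\Delta\to0$ to the Bhattacharyya integral $\int\sqrt{g_{V_\psi}(t)\,g_{V_\phi}(\xi(t))\,\xi'(t)}\,dt$ of the two limiting Gaussians (with $g_V$ the $N(0,V)$ density and Jacobian $\xi'=\sqrt{S_\phi/S_\psi}$). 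A routine completion of the square then evaluates this integral, and using $\cpq=(S_\psi/V_\psi)(S_\phi/V_\phi)^{-1}$ the prefactor collapses to $\sqrt{2/(\cpq^{1/2}+\cpq^{-1/2})}$ and the exponent to $-b^2S_\phi^2/(4(1+\cpq)V_\psi)$, which is the claimed bound. Finally, when $|m_n-\tfrac{S_\psi}{S_\phi}n|/\sqrt n\to\infty$ the same matching drives $\xi$ to $\pm\infty$ on every fixed bin, so each $b_k^{(n)}\to0$ and the bound collapses to $0$; this case also follows directly from \eqref{ag}--\eqref{al}, since the LU fidelity is dominated by the LOCC-optimal fidelity.
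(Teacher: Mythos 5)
Your proposal is correct and follows essentially the same route as the paper: partition the indices into bins defined by the $\psi$-log-probability on the $\sqrt{n}$ scale, apply Cauchy--Schwarz bin by bin, identify the Gaussian limits of both bin masses (the cross-distribution mass, which you re-derive by matching counting functions at CLT level, is exactly what the paper imports from Lemma 12 of \cite{w.13:_asymp_class_locc_conver_its}), and pass to the Gaussian Bhattacharyya integral by refining the mesh with uniform tail control. The only differences are presentational (your threshold convention flips the sign of $b$ in the transported quantile, consistently), so the evaluations of the prefactor $\sqrt{2/(\cpq^{1/2}+\cpq^{-1/2})}$ and the exponent $-b^2S_\phi^2/(4(1+\cpq)V_\psi)$ agree with the paper's.
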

 Because of the positivity of the fidelity, Lemma \ref{upper-thm} immediately proves that $\lim_{n\rightarrow\infty}F(P_\psi^{n\downarrow},P_\phi^{m_n\downarrow})= 0$ for $m_n$ such that $|m_n-\frac{S_\psi}{S_\phi}n|/\sqrt{n}\rightarrow \infty$.

Now, we prove this upper bound.
We define $S_n^{\omega}(x):=\{1,2,\dots,\lceil e^{S_\omega n+x\sqrt{n}\rceil}\}$ and $S_n^{\omega}(x,x'):=S_n^{\omega}(x')\backslash S_n^{\omega}(x)$ for $\omega=\psi, \phi$.
Now we prepare a lemma.
\begin{Lem}
 \begin{align}
  &\lim_{n\rightarrow \infty}P_\psi^{n\downarrow}(S_n^{\psi}(x))=\Phi(\frac{x}{\sqrt{V_\psi}})\\
  &\lim_{n\rightarrow \infty}P_\phi^{m_n\downarrow}(S_n^{\psi}(x))\nonumber\\
  =&\left\{
\begin{array}{ll}
 \displaystyle
0\quad & \text{if } \frac{m_n-\frac{S_\psi}{S_\phi}n}{\sqrt{n}}\rightarrow \infty \\
1\quad & \text{if } \frac{m_n-\frac{S_\psi}{S_\phi}n}{\sqrt{n}}\rightarrow -\infty \\
\Phi_{\psi,\phi,b}\left(\frac{x}{\sqrt{V_\psi}}\right)\quad 
& \text{if }\frac{m_n-\frac{S_\psi}{S_\phi}n}{\sqrt{n}}\rightarrow b
\end{array}
\right.
 \end{align}
holds for an arbitrary real number $x$, where $\Phi_{\psi,\phi,b}(x):=\Phi\left(\sqrt{\cpq^{-1}}(x-\frac{bS_\phi}{\sqrt{V_\psi}})\right)$.
\end{Lem}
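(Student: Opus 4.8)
\emph{Overview.} The plan is to reduce both limits to the central limit theorem (CLT) for the per-letter self-information, together with an elementary counting estimate that converts a prescribed rank of the form $\lceil e^{S_\omega n+x\sqrt n}\rceil$ into a threshold on the self-information. Write $\ell_\omega(a):=-\log P_\omega(a)$, so that $E_{P_\omega}[\ell_\omega]=S_\omega$ and $\mathrm{Var}_{P_\omega}(\ell_\omega)=V_\omega$, and for a sequence $\mathbf a=(a_1,\dots,a_n)$ put $L_n^\omega(\mathbf a):=-\log P_\omega^n(\mathbf a)=\sum_{i=1}^n\ell_\omega(a_i)$. By definition $P_\omega^{n\downarrow}(S_n^\omega(x))$ is the sum of the $\lceil e^{S_\omega n+x\sqrt n}\rceil$ largest values of $P_\omega^n$; since the largest probabilities are exactly the outcomes of least self-information, this is the $P_\omega^n$-mass of a set of the form $\{\mathbf a:L_n^\omega(\mathbf a)\le s\}$, up to boundary ties whose mass is negligible in the limit. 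The whole argument therefore hinges on locating the threshold $s$.

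\emph{Counting and the first limit.} First I would prove, for each fixed $y\in\real$, the estimate $\#\{\mathbf a:L_n^\omega(\mathbf a)\le S_\omega n+y\sqrt n\}=e^{S_\omega n+y\sqrt n+o(\sqrt n)}$. Writing the cardinality as $\sum_{\mathbf a:L_n^\omega\le s}e^{L_n^\omega(\mathbf a)}P_\omega^n(\mathbf a)=E_{P_\omega^n}[e^{L_n^\omega}\mathbf 1\{L_n^\omega\le s\}]$ with $s=S_\omega n+y\sqrt n$, the upper bound $\le e^{s}P_\omega^n(L_n^\omega\le s)\le e^{s}$ is immediate. For the lower bound, restrict the expectation to the event $L_n^\omega\in[s-\epsilon\sqrt n,s]$, on which $e^{L_n^\omega}\ge e^{S_\omega n+(y-\epsilon)\sqrt n}$, and use that by the CLT $P_\omega^n(L_n^\omega\in[s-\epsilon\sqrt n,s])\to\Phi(y/\sqrt{V_\omega})-\Phi((y-\epsilon)/\sqrt{V_\omega})>0$; this yields a lower bound $e^{S_\omega n+(y-\epsilon)\sqrt n+O(1)}$, and sending $\epsilon\downarrow 0$ after $n\to\infty$ matches the two bounds. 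Only the CLT is used (which holds equally in the lattice case), so the Bahadur--Rao refinement needed for Lemma \ref{mainlem} is not required here. Inverting this estimate, the threshold $s$ with $\#\{L_n^\psi\le s\}=\lceil e^{S_\psi n+x\sqrt n}\rceil$ satisfies $s=S_\psi n+(x+o(1))\sqrt n$, so $P_\psi^{n\downarrow}(S_n^\psi(x))=P_\psi^n(L_n^\psi\le s)+o(1)\to\Phi(x/\sqrt{V_\psi})$ by the CLT applied to $L_n^\psi$, giving the first identity.

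\emph{The second limit.} The only new feature is that $S_n^\psi(x)$ is calibrated by $S_\psi n$ while the distribution is $P_\phi^{m_n}$. I would re-express the same cardinality $N:=\lceil e^{S_\psi n+x\sqrt n}\rceil$ in $\phi$-coordinates by writing $\log N=S_\phi m_n+x'\sqrt{m_n}$. With $m_n=(S_\psi/S_\phi)n+b\sqrt n+o(\sqrt n)$ one gets $S_\phi m_n=S_\psi n+bS_\phi\sqrt n+o(\sqrt n)$ and $\sqrt{m_n}=\sqrt{S_\psi/S_\phi}\,\sqrt n\,(1+o(1))$, hence $x'\to(x-bS_\phi)\sqrt{S_\phi/S_\psi}$. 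Applying the $\phi$-version of the previous step (the top-$N$ mass of $P_\phi^{m_n}$ tends to $\Phi(x'/\sqrt{V_\phi})$) and substituting $x'$, the limit equals $\Phi\big((x-bS_\phi)\sqrt{S_\phi/(V_\phi S_\psi)}\big)$. The elementary identity $\sqrt{S_\phi/(V_\phi S_\psi)}=\sqrt{\cpq^{-1}}/\sqrt{V_\psi}$, which is just the definition $\cpq=(S_\psi/V_\psi)(S_\phi/V_\phi)^{-1}$ rearranged, rewrites this as $\Phi\big(\sqrt{\cpq^{-1}}(x/\sqrt{V_\psi}-bS_\phi/\sqrt{V_\psi})\big)=\Phi_{\psi,\phi,b}(x/\sqrt{V_\psi})$. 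The two divergent cases are the $b\to\pm\infty$ limits of the same computation: if $(m_n-(S_\psi/S_\phi)n)/\sqrt n\to+\infty$ then $\log N-S_\phi m_n\to-\infty$, so $x'\to-\infty$ and the mass $\to 0$; if it $\to-\infty$ then $x'\to+\infty$ and the mass $\to 1$.

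\emph{Main obstacle.} The delicate point is the counting estimate combined with the rank-to-threshold inversion: one must keep the error in the log-cardinality genuinely $o(\sqrt n)$, so that after division by $\sqrt n$ it becomes $o(1)$ and does not perturb the argument of $\Phi$, and one must show that boundary ties --- which are genuinely present in the lattice case --- carry vanishing $P_\omega^n$-mass. Making the $\epsilon$-sandwich of the lower bound rigorous (with $\epsilon\downarrow0$ taken only after $n\to\infty$) and controlling this tie contribution uniformly is where the real care lies; the remainder is bookkeeping around the CLT and the algebraic identity defining $\cpq$.
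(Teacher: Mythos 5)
Your argument is correct, but it is not the route the paper takes: the paper disposes of this lemma in one line by citing Lemma 12 of Kumagai--Hayashi \cite{w.13:_asymp_class_locc_conver_its}, whereas you give a self-contained proof from first principles. Your two ingredients --- the counting estimate $\#\{\mathbf a:L_n^\omega(\mathbf a)\le S_\omega n+y\sqrt n\}=e^{S_\omega n+y\sqrt n+o(\sqrt n)}$ (upper bound by $E[e^{L_n}\mathbf 1\{L_n\le s\}]\le e^s$, lower bound by restricting to a CLT window of positive limiting mass) and the rank-to-threshold inversion followed by the CLT --- are sound, and the recalibration of $N=\lceil e^{S_\psi n+x\sqrt n}\rceil$ into $\phi$-coordinates together with the identity $\sqrt{S_\phi/(V_\phi S_\psi)}=\sqrt{\cpq^{-1}}/\sqrt{V_\psi}$ correctly reproduces $\Phi_{\psi,\phi,b}(x/\sqrt{V_\psi})$, including the two divergent cases as the $x'\to\mp\infty$ limits. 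What your approach buys is transparency and generality: it uses only the CLT, so it applies verbatim to lattice and non-lattice distributions alike (consistent with the paper's use of this lemma inside the bound of Theorem \ref{low-err}, which must hold for lattice states), and it makes visible exactly why only an $o(\sqrt n)$ counting accuracy is needed here while the Bahadur--Rao refinement is reserved for Lemma \ref{mainlem}. One remark on your stated ``main obstacle'': the boundary-tie issue dissolves if you run the inversion as a two-sided inclusion rather than an equality --- once $\#\{L_n\le S_\psi n+(x+\delta)\sqrt n\}\ge N$ the top-$N$ set is contained in that sublevel set, and once $\#\{L_n\le S_\psi n+(x-\delta)\sqrt n\}\le N$ it contains that sublevel set, so the mass is sandwiched between two CLT quantities converging to $\Phi((x\pm\delta)/\sqrt{V_\psi})$ and $\delta\downarrow 0$ finishes the job with no need to estimate the mass of any level set. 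The only hypothesis you should state explicitly is $V_\psi,V_\phi>0$ (i.e., neither state is maximally entangled), without which the CLT normalization and the statement itself degenerate.
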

This is immediately followed from Lemma 12 in \cite{w.13:_asymp_class_locc_conver_its}.
For an arbitrary fixed small real number $\epsilon >0$,
we choose a sufficiently large real number $R$ to satisfy
\begin{align}
 \sqrt{\Phi(-R)}\sqrt{\Phi_{\psi,\phi,b}(-R)}+\sqrt{1-\Phi(R)}\sqrt{1-\Phi_{\psi,\phi,b}(R)}<\epsilon.
\end{align}
Moreover, we introduce $c_j^N:=-R+\frac{2R}{N}j$ and $\tilde{c_j}^N:=\sqrt{V_\psi}c_j^N$.
Then we have
\begin{align}
 &F(P_\psi^{n\downarrow},P_\phi^{m_n\downarrow})\nonumber\\
\leq&\sqrt{P_\psi^{n\downarrow}(S_n^{\psi}(\tilde{c_0}^N))P_\phi^{m_n\downarrow}(S_n^{\psi}(\tilde{c_0}^N))}\nonumber\\
&+\sum_{j=1}^N\sqrt{P_\psi^{n\downarrow}(S_n^{\psi}(\tilde{c_{j-1}}^N,\tilde{c_j}^N))P_\phi^{m_n\downarrow}(S_n^{\psi}(\tilde{c_{j-1}}^N,\tilde{c_j}^N))}\nonumber\\
&+\sqrt{1-P_\psi^{n\downarrow}(S_n^{\psi}(\tilde{c_N}^N))}\sqrt{1-P_\phi^{m_n\downarrow}(S_n^{\psi}(\tilde{c_N}^N)}.\label{finiteup}
\end{align}
Taking the limit $n\rightarrow\infty$, the right hand side of (\ref{finiteup}) becomes $\sqrt{\Phi(-R)}<\epsilon$ ($\sqrt{1-\Phi(R)}<\epsilon$) when $\frac{m_n-\frac{S_\psi}{S_\phi}n}{\sqrt{n}}\rightarrow -\infty$ ($\frac{m_n-\frac{S_\psi}{S_\phi}n}{\sqrt{n}}\rightarrow \infty$), respectively.
Since this is true for any $\epsilon >0$, $\lim_{n\rightarrow\infty}F(P_\psi^{n\downarrow},P_\phi^{m_n\downarrow})\leq 0$ holds.
When $(m_n-\frac{S_\psi}{S_\phi}n)/\sqrt{n}\rightarrow b$, taking the limit $N\rightarrow\infty$ after taking $n\rightarrow\infty$, the right hand side of (\ref{finiteup}) becomes
\begin{align}
&\lim_{N\rightarrow\infty}
\sum_{j=1}^N
\sqrt{\Phi(c_j^N)-\Phi(c_{j-1}^N)}
 \sqrt{\Phi_{\psi,\phi,b}(c_j^N)-\Phi_{\psi,\phi,b}(c_{j-1}^N)}\nonumber\\
 &+\sqrt{\Phi(-R)}\sqrt{\Phi_{\psi,\phi,b}(-R)}+\sqrt{1-\Phi(R)}\sqrt{1-\Phi_{\psi,\phi,b}(R)}\nonumber\\
=&\int_{-R}^R\sqrt{\od{\Phi}{t}(t)}\sqrt{\od{\Phi_{\psi,\phi,b}}{t}(t)}d t\nonumber\\
 &+\sqrt{\Phi(-R)}\sqrt{\Phi_{\psi,\phi,b}(-R)}+\sqrt{1-\Phi(R)}\sqrt{1-\Phi_{\psi,\phi,b}(R)}\nonumber\\
<&\int_{-R}^R\sqrt{\od{\Phi}{t}(t)}\sqrt{\od{\Phi_{\psi,\phi,b}}{t}(t)}d t+\epsilon\nonumber\\
\leq&\int_{-\infty}^{\infty}\sqrt{\od{\Phi}{t}(t)}\sqrt{\od{\Phi_{\psi,\phi,b}}{t}(t)}d t+\epsilon\nonumber\\
=& \sqrt{\frac{2}{\cpq^{\frac{1}{2}}+\cpq^{-\frac{1}{2}}}}\exp \left[-\frac{b^2S_\phi^2}{4(1+\cpq)V_\psi}\right] + \epsilon.
\label{fidupp}
\end{align}
Since $\epsilon>0$ is arbitrary, (\ref{fidupp}) implies that
\begin{align}
 &\lim_{n\rightarrow\infty}F(P_\psi^{n\downarrow},P_\phi^{m_n\downarrow})\nonumber\\
\leq&
\sqrt{\frac{2}{\cpq^{\frac{1}{2}}+\cpq^{-\frac{1}{2}}}}\exp \left[-\frac{b^2S_\phi^2}{4(1+\cpq)V_\psi}\right]
\end{align}
when $\frac{m_n-\frac{S_\psi}{S_\phi}n}{\sqrt{n}}\rightarrow b$ is true.

\subsection{The Limit of the Fidelity for Non-Lattice Distributions}\label{sa32}
Now, we establish Lemma \ref{mainlem} in Sec. \ref{LU-main}, i.e., 
the formula of the limit of the fidelity concerning the LU conversion.
Since we have already proved that $\lim_{n\rightarrow\infty}F(P_\psi^{n\downarrow},P_\phi^{m_n\downarrow})=0$ when $\left|m_n-\frac{S_{\psi}}{S_{\phi}}n\right|/\sqrt{n}\rightarrow\infty$ in Lemma \ref{upper-thm},
for the proof of Lemma \ref{mainlem},
it is enough to show (\ref{limfid}) only when
neither $P_\psi$ nor $P_\phi$ is a lattice distribution and $(m_n-\frac{S_\psi}{S_\phi}n)/\sqrt{n}\rightarrow b$.

We introduce the cumulant generating functions
$g_\omega(1+s):= \log \tr (\tr_B\omega)^{1+s}$ for $\omega=\psi,\phi$.
They satisfy
\begin{align}
 g_\omega'(1)&=-S_\omega\\
 g_\omega''(1)&=V_\omega
\end{align}
Since $g_\omega$ is strictly convex, 
we can define the inverse function of $g_\omega'$, which is denoted by $h_\omega$.
Then, we have
\begin{align}
 h_\omega'(-S_\omega)&=\frac{1}{V_\omega}
\end{align}
Now, we define the random variable 
$Z_{\omega,n}:= (\log P^{n,\downarrow}_{\omega}(\hat{j})+ n S_\omega)/ \sqrt{n}$.
Then, when 
$j= P_C \{ Z_{\omega,n} \ge a  \}$,
$j$ is the maximum integer satisfying 
$\log P^{n,\downarrow}_{\omega}(j) \ge - n S_\omega+ \sqrt{n} a$, where $P_C$ is the counting measure. We focus on the Legendre transform of $g_\omega$, which is written as 
\begin{align}
\max_s s R - g_\omega(s)
=  h_\omega(R) R - g_\omega(h_\omega(R)).
\end{align}
We employ the strong large deviation by Bahadur and Rao \cite{bahadur60:_deviat_sampl_mean}\cite{joutard13:_stron_large_deviat_theor}:
\begin{align}
&\log P_C \{ \log P^{n,\downarrow}_{\omega} (\hat{j}) \ge n R  \} \nonumber\\
=& 
-n (h_\omega(R) R - g_\omega(h_\omega(R)))\nonumber\\
&- \log (\sqrt{2 \pi n g_\omega''(h_\omega(R))} h_\omega(R))+o(1)\nonumber\\
=& 
-n (h_\omega(R) R - g_\omega(h_\omega(R)))
- \log \sqrt{2 \pi n }
- \log h_\omega(R)\nonumber\\
&+ \frac{1}{2}\log {h_\omega}'(R)+o(1) .
\label{6-1-2}
\end{align}
In the following, for a unified treatment, 
the functions $- (h_\omega(R) R - g_\omega(h_\omega(R)))$ and 
$- \log \sqrt{2 \pi }
- \log h_\omega(R)
+ \frac{1}{2}\log {h_\omega}'(R)) $ are written as
$f_{\omega,0} (R)$ and $f_{\omega,1} (R)$.
So, (\ref{6-1-2}) is simplified as
$\log P_C \{ \log p^{n,\downarrow}_{\omega} (\hat{j}) \ge n R  \} 
= \sum_{k=0}^{1} f_{\omega,k} (R) n^{1-k} -\frac{1}{2} \log n+o(1) $. 

Thus, we have
\begin{align}
&
\log J_{\omega,n}(a)
:=\log P_C \{ \log P^{n,\downarrow}_{\omega} (\hat{j}) \ge - n S+ \sqrt{n}a \}
\\
=& \sum_{k=0}^{1} \sum_{t=0}^{2(1-k)}
\frac{f_{\omega,k}^{(t)} (-S_\omega) n^{1-k-\frac{t}{2}}  }{t !} a^t
-\frac{1}{2} \log n+o(1) . 
\end{align}

To calculate the fidelity, we define 
$\Delta Z$ as
\begin{align}
\log J_{\phi,m_n}\left(\sqrt{\frac{n}{m_n}}(Z_{\psi,n}+\Delta Z)\right)-\log J_{\psi,n}(Z_{\psi,n})=o(1). \label{6-6-1}
\end{align}
Define 
$\alpha_{\psi,0}(Z_{\psi,n})$ and $\alpha_{\phi,i}(Z_{\psi,n})$ as
\begin{align}
&\alpha_{\psi,0}(Z_{\psi,n})\nonumber\\
:=&\sum_{k=0}^{1}
\sum_{j=0}^{2(1-k)}
n^{1-k-\frac{j}{2}} \frac{Z_{\psi,n}^j}{j !} f_{\psi,k}^{(j)}(-S_\psi) ,
\\&\nonumber\\
&\alpha_{\phi,i}(Z_{\psi,n})\nonumber\\
:=&\sum_{k=0}^{1-\frac{i}{2}}
\sum_{j=0}^{2(1-k)-i}
m_n^{1-k-(i+j)}n^{\frac{i+j}{2}} \frac{Z_{\psi,n}^j}{j !} f_{\phi,k}^{(i+j)}(-S_\phi) ,
\end{align}
where $f_{\omega,k}^{(i)} $ is the $i$-th derivative of $f_{\omega,k}$.
Then,
$\Delta Z$ 
satisfies the equation
\begin{align}
&\sum_{i=1}^{2} 
\alpha_{\phi,i}(Z_{\psi,n}) \frac{(\Delta Z)^i}{i !} \nonumber\\
=& -(\alpha_{\phi,0}(Z_{\psi,n})-\alpha_{\psi,0}(Z_{\psi,n})-\frac{1}{2}\log\frac{m_n}{n})+o(1).\nonumber\\
\label{6-5-2}
\end{align}
Due to the definition (\ref{6-6-1}), we have
\begin{align}
&F(P_\psi^{n\downarrow},P_\phi^{m_n\downarrow})\nonumber\\
=&
\sum_{j} P^{n\downarrow}_{\psi} (j)
\sqrt{\frac{P^{m_n\downarrow}_{\phi} (j)}{P^{n\downarrow}_{\psi} (j)}}
\nonumber\\
=&
\sum_{j} P^{n\downarrow}_{\psi} (J_{\psi,n}(Z_{\psi,n}(j)))
\sqrt{\frac{P^{m_n\downarrow}_{\phi} (J_{\psi,n}(Z_{\psi,n}(j)))}{P^{n\downarrow}_{\psi} (J_{\psi,n}(Z_{\psi,n}(j)))}}
\nonumber\\
=&E
\Bigg[
e^{\frac{1}{2}\left[\log P^{m_n\downarrow}_{\phi} (J_{\psi,n}(Z_{\psi,n}(\hat{j})))-\log P^{n\downarrow}_{\psi} (J_{\psi,n}(Z_{\psi,n}(\hat{j})))\right]}
\Bigg]
\nonumber\\
=&
E
\Bigg[
e^{\frac{1}{2}\left[-m_n S_\phi+\sqrt{m_n}\sqrt{\frac{n}{m_n}}(Z_{\psi,n}+\Delta Z)-(-n S_\psi + \sqrt{n}Z_{\psi,n})+o(1)\right]}
\Bigg] \nonumber\\
=&
E
\Bigg[
\exp\frac{1}{2}(n S_\psi -m_n S_\phi+ \sqrt{n}\Delta Z+o(1))
\Bigg]
 .\label{6-5-5}
\end{align}
We note that the fourth equality of (\ref{6-5-5}) can be established since both $P_\psi$ and $P_\phi$ are non-lattice distribution, but it is not the case when one of them is a lattice distribution.

Now, it is needed to solve the equation \eqref{6-5-2} with respect to 
$\Delta Z$.
Let $m_n=\frac{S_\psi}{S_\phi}n+b\sqrt{n}$ for $b\in\real$.
Notice that $\alpha_{\phi,i}(Z_{\psi,n})=O(n^{1-\frac{i}{2}})$.
We apply Lemma \ref{l6-6} in the next section to the equation \eqref{6-5-2} with
$x= \frac{\Delta Z}{\sqrt{n}}$,
$a_i= \alpha_{\phi,i}(Z_{\psi,n}) n^{\frac{i}{2}-1}/i!$,
and
$\epsilon= (-\alpha_{\phi,0}(Z_{\psi,n})+\alpha_{\psi,0}(Z_{\psi,n})+\frac{1}{2}\log\frac{S_\psi}{S_\phi})/n$.
Then, we obtain
\begin{align}
&\sqrt{n}\Delta Z\nonumber\\
=& bS_\phi\sqrt{n}-\frac{1}{2}\left[\cpq^{-1}(T-\frac{bS_\phi}{\sqrt{V_\psi}})^2-T^2\right]+\frac{1}{2}\log\cpq^{-1}+o(1),
\end{align}
where $T := Z_{\psi,n}/\sqrt{V_\psi}$.
Then, we have
\begin{align}
 &F(P_\psi^{n\downarrow},P_\phi^{m_n\downarrow})\nonumber\\
=&E
\Bigg[
\exp\frac{1}{2}(-bS_\phi\sqrt{n}+ \sqrt{n}\Delta Z+o(1))
\Bigg]\nonumber\\
=&E
\Bigg[
e^{-\frac{1}{4}\left[\cpq^{-1}(T-\frac{bS_\phi}{\sqrt{V_\psi}})^2-T^2\right]+\frac{1}{4}\log\cpq^{-1}+o(1)}
\Bigg].
\label{finitelow}
\end{align}
(\ref{finitelow}) is unchanged even if $b$ is replaced with $b+o(1)$, hence it holds whenever $\frac{m_n-\frac{S_\psi}{S_\phi}n}{\sqrt{n}}\rightarrow b$ holds.
By the central limit theorem, we obtain
\begin{align}
& E
\Bigg[
e^{-\frac{1}{4}\left[\cpq^{-1}(T-\frac{bS_\phi}{\sqrt{V_\psi}})^2-T^2\right]+\frac{1}{4}\log\cpq^{-1}+o(1)}
\Bigg]\nonumber\\
\rightarrow&\int \cpq^{-\frac{1}{4}}e^{-\frac{1}{4}\left[\cpq^{-1}(t-\frac{bS_\phi}{\sqrt{V_\psi}})^2-t^2\right]}d \Phi(t) \nonumber\\
=&\sqrt{\frac{2}{\cpq^{\frac{1}{2}}+\cpq^{-\frac{1}{2}}}}\exp \left[-\frac{b^2S_\phi^2}{4(1+\cpq)V_\psi}\right].
\end{align}
Therefore, we obtain the desired argument.

\subsection{Perturbation for higher order equation}

\begin{Lem}\label{l6-6}
Consider the equation
\begin{align}
\epsilon =\sum_{i=1}^l a_i x^i.\label{6-6-11}
\end{align}
When $\epsilon$ is sufficiently small, 
the solution $x$ is approximated as
\begin{align}
x= \sum_{i=1}^l \epsilon^i x_i + O(\epsilon^{l+1}).\label{6-6-10}
\end{align}
where
$x_1$ is given as $\frac{1}{a_1} $
and 
$x_l$ with $l \ge 2$ is inductively given as
$- \frac{1}{a_1}
\sum_{i_1,i_2,\ldots, i_{l-1}:\sum_{k=1}^{l-1}k i_k=l}
a_{\sum_{k=1}^{l-1}i_k} 
\frac{(\sum_{k=1}^{l-1}i_k)!}{\prod_{k=1}^{l-1} i_k! }
\prod_{k=1}^{l-1}x_k^{i_k}$.
Especially, $x_2$ is given as
\begin{align}
x_2 &= -\frac{a_2}{a_1} x_1^2 =-\frac{a_2}{a_1^3}
\end{align}
\end{Lem}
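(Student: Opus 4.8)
The plan is to treat the polynomial equation \eqref{6-6-11} as a formal power series inversion problem and determine the coefficients $x_i$ by substituting the ansatz \eqref{6-6-10} and matching equal powers of $\epsilon$. First I would record the standing assumption $a_1 \neq 0$ (needed already for $x_1 = 1/a_1$ to make sense). Since the right-hand side $\sum_{i=1}^l a_i x^i$ is a polynomial in $x$, it is entire, and at $\epsilon = 0$ the equation has the root $x = 0$ with derivative $\partial_x\bigl(\sum_{i=1}^l a_i x^i\bigr)\big|_{x=0} = a_1 \neq 0$. The implicit function theorem then yields a unique branch $x(\epsilon)$ with $x(0) = 0$, analytic near $\epsilon = 0$, so it admits a convergent expansion $x = \sum_{k \geq 1} x_k \epsilon^k$. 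This legitimizes the ansatz and reduces the lemma to the determination of the coefficients; truncating this convergent series at order $l$ automatically leaves a remainder $O(\epsilon^{l+1})$.

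Next I would substitute $x = \sum_{k \geq 1} x_k \epsilon^k$ into the right-hand side and expand each power $x^i = \bigl(\sum_k x_k \epsilon^k\bigr)^i$ by the multinomial theorem in occupation-number variables $i_k$, where $i_k$ counts the factors equal to $x_k$. The coefficient of $\epsilon^l$ in $x^i$ is then $\sum \frac{i!}{\prod_k i_k!}\prod_k x_k^{i_k}$, summed over configurations with $\sum_k i_k = i$ and $\sum_k k i_k = l$; note $i = \sum_k i_k \leq \sum_k k i_k = l$, so only $a_i$ with $i \leq l$ ever appear, consistent with the degree of the polynomial. Multiplying by $a_i$, summing over $i$, and equating with the left-hand side $\epsilon$ (whose only nonzero coefficient is $1$ at order $\epsilon^1$) gives at order $\epsilon^1$ the relation $a_1 x_1 = 1$, hence $x_1 = 1/a_1$, and at each order $\epsilon^l$ with $l \geq 2$ the requirement that the full coefficient vanish.

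The key step is to isolate, among the order-$\epsilon^l$ contributions, the unique one that is linear in the top coefficient $x_l$. A configuration contains $x_l$ only if $i_l \geq 1$, but then $\sum_k k i_k \geq l \, i_l$ together with $\sum_k k i_k = l$ forces $i_l = 1$ and $i_k = 0$ for all $k < l$, so $i = \sum_k i_k = 1$; this term is exactly $a_1 x_l$. Every remaining configuration at order $\epsilon^l$ has $i_k = 0$ for $k \geq l$ and hence involves only $x_1, \dots, x_{l-1}$, with indices constrained by $\sum_{k=1}^{l-1} k i_k = l$. Moving $a_1 x_l$ to the other side of the vanishing-coefficient relation and dividing by $a_1$ produces precisely the stated inductive formula for $x_l$, and specializing to $l = 2$ (where the single admissible configuration is $i_1 = 2$, giving multinomial coefficient $2!/2! = 1$) recovers $x_2 = -a_2 x_1^2/a_1 = -a_2/a_1^3$.

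I expect the main obstacle to be purely organizational rather than conceptual: one must set up the multinomial expansion in the occupation variables $i_k$ carefully enough that the coefficient-matching at order $\epsilon^l$ cleanly separates the single $a_1 x_l$ term from a remainder depending only on $x_1, \dots, x_{l-1}$ and on the prescribed coefficients $a_i$. Once that separation is made transparent, the recursion is manifestly well founded by induction on $l$, and no genuinely delicate estimate is required beyond the analyticity already supplied in the first step.
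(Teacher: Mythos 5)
Your proposal is correct and follows essentially the same route as the paper: substitute the ansatz $x=\sum_k x_k\epsilon^k$ into $\epsilon=\sum_i a_i x^i$ and match coefficients of each power of $\epsilon$, isolating the single $a_1 x_l$ term at order $\epsilon^l$ to obtain the stated recursion. The only difference is that you additionally justify the expansion via the implicit function theorem (under the implicit assumption $a_1\neq 0$), which the paper's two-line argument leaves tacit.
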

This lemma can be shown as follows.
First, we substitute \eqref{6-6-10} into \eqref{6-6-11}.
Then, compare the coefficients with the order $\epsilon^i$.
Hence, we obtain $x_l=
- \frac{1}{a_1}
\sum_{i_1,i_2,\ldots, i_{l-1}:\sum_{k=1}^{l-1}k i_k=l}
a_{\sum_{k=1}^{l-1}i_k} 
\frac{(\sum_{k=1}^{l-1}i_k)!}{\prod_{k=1}^{l-1} i_k! }
\prod_{k=1}^{l-1}x_k^{i_k}$.

\section{Detailed Establishment of Theorem \ref{HT3}}\label{Proof_HT3}
\subsection{Achievability}
To show the achievability, 
we show the following lemma.
\begin{Lem}\label{FirstLoss}
The following relation
holds with $1 > \gamma > \frac{1}{2}$ regardless of $\cpq$:
\begin{align}
&\lim_{n\rightarrow \infty}\min_{m\in\natn , C,D:\text{LOCC}} B(C(\psi^{\otimes n}),\phi^{\otimes m})\nonumber\\
&\hspace{100pt}+B(\psi^{\otimes (n-n^{\gamma})},D\circ C(\psi^{\otimes n}))\nonumber\\
=& 0.\label{largeloss}
\end{align}
\end{Lem}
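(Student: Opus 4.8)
The plan is to exhibit explicit LOCC maps $C_n$ (forward) and $D_n$ (recovery) achieving both vanishing errors for a single intermediate copy number $m_n$ placed strictly inside the window dictated by the first-order rates, with polynomial margins of order $n^\gamma$ on both ends. The key point is that since $\gamma>1/2$ these margins dominate the $\Theta(\sqrt n)$ second-order fluctuations, so the delicate $\cpq$-dependent boundary never enters and the conclusion holds regardless of $\cpq$.

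First I would reduce everything to probability distributions exactly as in Appendices \ref{onlyif} and \ref{proof-main-achieve}: by Nielsen's theorem and Lemma \ref{vidalem}, the minimal forward error equals $\sqrt{1-\max_{P_\psi^n\prec P'}F(P',P_\phi^{m_n})}$, and the minimal recovery error from $\phi^{\otimes m_n}$ to $\psi^{\otimes(n-n^\gamma)}$ equals $\sqrt{1-\max_{P_\phi^{m_n}\prec P''}F(P'',P_\psi^{n-n^\gamma})}$. Taking $C_n$ and $D_n$ to be the LOCC operations realizing these optima, the triangle inequality and monotonicity of the Bures distance, used as in \eqref{BuresUp}, give
\begin{align}
&B(C_n(\psi^{\otimes n}),\phi^{\otimes m_n})+B(\psi^{\otimes(n-n^\gamma)},D_n\circ C_n(\psi^{\otimes n}))\nonumber\\
\le\,&2\,B(\phi^{\otimes m_n},C_n(\psi^{\otimes n}))+B(\psi^{\otimes(n-n^\gamma)},D_n(\phi^{\otimes m_n})),\nonumber
\end{align}
so it suffices to make the forward error $B(\phi^{\otimes m_n},C_n(\psi^{\otimes n}))$ and the ideal recovery error $B(\psi^{\otimes(n-n^\gamma)},D_n(\phi^{\otimes m_n}))$ both vanish for a common $m_n$.

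Second, I would set $m_n:=\lceil\frac{S_\psi}{S_\phi}(n-\tfrac12 n^\gamma)\rceil$. Relative to the forward first-order rate $\frac{S_\psi}{S_\phi}n$ this yields $m_n=\frac{S_\psi}{S_\phi}n+b\,n^\gamma+o(n^\gamma)$ with $b=-\frac{S_\psi}{2S_\phi}<0$, while relative to the recovery rate $\frac{S_\psi}{S_\phi}(n-n^\gamma)$ it yields a surplus $m_n-\frac{S_\psi}{S_\phi}(n-n^\gamma)=\frac{S_\psi}{2S_\phi}n^\gamma>0$. Invoking Theorem 10 of \cite{w.13:_asymp_class_locc_conver_its}, whose $b>0$ vanishing statements are recorded in \eqref{ag} and \eqref{al}, the complementary regimes give $\max_{P_\psi^n\prec P'}F(P',P_\phi^{m_n})\to 1$ for the forward step (surplus source entropy) and $\max_{P_\phi^{m_n}\prec P''}F(P'',P_\psi^{n-n^\gamma})\to 1$ for the recovery step (surplus $\phi$ copies). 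Hence both Bures distances tend to zero and \eqref{largeloss} follows.

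The main obstacle is making precise that one and the same $m_n$ renders both directions asymptotically lossless, and this is exactly where $\gamma>1/2$ is essential: the admissible window $[\frac{S_\psi}{S_\phi}(n-n^\gamma),\frac{S_\psi}{S_\phi}n]$ has width $\Theta(n^\gamma)$, so placing $m_n$ at its midpoint leaves margins of order $n^\gamma$ on both sides, which asymptotically dominate the $\Theta(\sqrt n)$ fluctuations governing the true second-order achievability boundary (the one described by the Rayleigh-normal distribution and $\cpq$ in \eqref{Fpq}). For $\gamma=1/2$ the window would shrink to width $\Theta(\sqrt n)$, comparable to the fluctuations, and no common $m_n$ could leave positive margins, which is precisely why the threshold $R(\psi,\phi)=1/2$ in Theorem \ref{HT3} is sharp.
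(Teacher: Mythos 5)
Your proposal is correct and follows essentially the same route as the paper's proof: the same midpoint choice $m_n\approx\frac{S_\psi}{S_\phi}(n-\frac{1}{2}n^{\gamma})$, the same invocation of Theorem 10 of \cite{w.13:_asymp_class_locc_conver_its} to get both the forward and recovery fidelities tending to $1$, and the same triangle-inequality-plus-monotonicity step to bound the composed recovery error. No gaps.
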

This lemma implies that LOCC conversion is always asymptotically preservable when the order of the amount of loss is greater than the square root of the initial number of copies of $\psi$ even if $\cpq  \neq 1$.
\begin{proof}
At first,
\begin{equation}
\lim_{n\rightarrow\infty}\max_{P_\psi^n\prec P'}
F(P',P_\phi^{\frac{S_{\psi}}{S_{\phi}}(n-\frac{1}{2}n^{\gamma})}) 
= 1 \label{firstloss}
\end{equation}
holds from Theorem 10 in \cite{w.13:_asymp_class_locc_conver_its}.
Now, we define the sequence $a_n$ as
$n-n^{\gamma}=
\frac{S_{\phi}}{S_{\psi}}
[\frac{S_{\psi}}{S_{\phi}}(n-\frac{1}{2}n^{\gamma})
-a_n]$.
Since $a_n =\Omega(
(\frac{S_{\psi}}{S_{\phi}}(n-\frac{1}{2}n^{\gamma}))^{\gamma}
)$,
the following holds similarly to (\ref{firstloss}).
\begin{align}
 &\lim_{n\rightarrow\infty}
\max_{P_\phi^{\frac{S_{\psi}}{S_{\phi}}(n-\frac{1}{2}n^{\gamma})}\prec P''}
F(P'',P_\psi^{(n-n^{\gamma})})\nonumber\\
=& \lim_{n\rightarrow\infty}\max_{P_\phi^{\frac{S_{\psi}}{S_{\phi}}(n-\frac{1}{2}n^{\gamma})}\prec P''}
F(P'',P_\psi^{
\frac{S_{\phi}}{S_{\psi}}
[\frac{S_{\psi}}{S_{\phi}}(n-\frac{1}{2}n^{\gamma})
-a_n]})
\nonumber\\
=&1.
\end{align}
Thus, since (\ref{Buresfid}) implies that
\begin{align}
& \lim_{n\rightarrow\infty} \min_{C:\text{LOCC}}B(\phi^{\otimes \frac{S_{\psi}}{S_{\phi}}(n-\frac{1}{2}n^{\gamma})},C(\psi^{\otimes n})) \nonumber\\
&\qquad +\min_{D:\text{LOCC}}B(\psi^{\otimes (n-n^{\gamma})},D(\phi^{\otimes \frac{S_{\psi}}{S_{\phi}}(n-\frac{1}{2}n^{\gamma})}))\nonumber\\
=& \lim_{n\rightarrow\infty}\sqrt{1-\max_{P_\psi^n\prec P'}F(P',P_\phi^{\frac{S_{\psi}}{S_{\phi}}(n-\frac{1}{2}n^{\gamma})})}\nonumber\\
&\qquad+\sqrt{1-\max_{P_\phi^{\frac{S_{\psi}}{S_{\phi}}(n-\frac{1}{2}n^{\gamma})}\prec P''}F(P'',P_\psi^{(n-n^{\gamma})})}\nonumber\\
=& 0, \nonumber
\end{align}
(\ref{largeloss}) is proven by applying the triangle inequality and the monotonicity of the Bures distance.
\end{proof}

\subsection{Impossibility}
Finally, we see the preservability when the order of the amount of loss is only the square root of the initial number of copies and $\cpq \neq 1$. We investigate the following asymptotic error as before.
\begin{align}
& \delta^{(2)}_{\infty}(\psi,\phi, \beta)\nonumber\\
:=&\lim_{n\rightarrow \infty}\min_{m\in\natn , C,D:\text{LOCC}} B(C(\psi^{\otimes n}),\phi^{\otimes m})\nonumber\\
&\hspace{90pt}+B(\psi^{\otimes n-\beta\sqrt{n}},D\circ C(\psi^{\otimes n})).\nonumber\\
\end{align}
\begin{Lem}\label{SecondLoss}
The inequality $\delta^{(2)}_{\infty}(\psi,\phi, \beta) > 0$
holds for any $\beta\in\real$ when $\cpq \neq 1$.
\end{Lem}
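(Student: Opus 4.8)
The plan is to reduce the two-operation lower bound to a single optimization over a recovery LOCC map, exactly as was done for Theorem \ref{noloss} in Appendix \ref{onlyif}, and then invoke the second-order Rayleigh--Normal analysis. First I would apply the same triangle-inequality-plus-monotonicity argument that produced \eqref{Bureslow} and \eqref{MCRElow}, but now with the recovered number of copies reduced to $n-\beta\sqrt{n}$. This gives, for each fixed $m$,
\begin{align}
&B(C(\psi^{\otimes n}),\phi^{\otimes m})+B(\psi^{\otimes n-\beta\sqrt{n}},D\circ C(\psi^{\otimes n}))\nonumber\\
\geq& \min_{D:\text{LOCC}}B(\psi^{\otimes n-\beta\sqrt{n}},D(\phi^{\otimes m})),\nonumber
\end{align}
so that $\delta^{(2)}_\infty(\psi,\phi,\beta)$ is bounded below by the limit of $\min_m\max\{\cdot,\cdot\}$ of the forward conversion error and the recovery error, the latter targeting only $n-\beta\sqrt{n}$ copies of $\psi$. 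By \eqref{Buresfid} and Lemma \ref{vidalem}, everything reduces to fidelities between majorized probability distributions.

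Next I would fix the maximizer $m_n$ with respect to $m$ and expand it as $m_n=\frac{S_\psi}{S_\phi}n+bn^{\gamma}+o(n^{\gamma})$, treating the cases $\gamma>1/2$ and $\gamma=1/2$ separately just as in Appendix \ref{onlyif}. For $\gamma>1/2$ the large-deviation estimates \eqref{ag}--\eqref{al} force at least one of the two fidelities to $0$. The essential case is $\gamma=1/2$, where I would apply Theorem 10 and Lemma 11 of \cite{w.13:_asymp_class_locc_conver_its}. The key difference from Theorem \ref{noloss} is that the recovery now targets $n-\beta\sqrt{n}$ copies, which shifts the relevant second-order parameter: writing $m_n=\frac{S_\psi}{S_\phi}n+b\sqrt{n}+o(\sqrt{n})$, the forward conversion error is governed by $Z_{\cpq}(bS_\phi/\sqrt{V_\psi})$ as in \eqref{Fpq}, while the recovery fidelity $\max_{P_\phi^{m_n}\prec P''}F(P'',P_\psi^{n-\beta\sqrt{n}})$ picks up an extra additive contribution to its second-order coefficient coming from $-\beta\sqrt{n}$ copies, yielding a limit of the form $\sqrt{1-Z_{\cpq^{-1}}(b')}$ with $b'$ an affine function of both $b$ and $\beta$.

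The main obstacle, and the heart of the argument, is to show that these two limiting fidelities cannot be \emph{simultaneously} made equal to $1$ by any single choice of $b$ when $\cpq\neq 1$. Because $Z_{v}(\mu)=0$ if and only if $v=1$ (the Rayleigh--Normal distribution degenerates to the normal only in the reversible case), the forward error vanishes only in a limit where the recovery error is strictly positive, and vice versa; the shift by $\beta\sqrt{n}$ changes the optimal $b$ but cannot close this gap, since a bounded $\beta$ produces only an $O(\sqrt{n})$ correction of the same order as the second-order term rather than the $\omega(\sqrt{n})$ slack that Lemma \ref{FirstLoss} exploited. Concretely, I would show $\min_m\max\{\cdot,\cdot\}$ stays bounded away from $1$ uniformly, so that after taking the supremum over $m$ and the infimum over $b$ the quantity $\delta^{(2)}_\infty(\psi,\phi,\beta)$ remains strictly positive. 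Quantifying this strict separation of the two Rayleigh--Normal expressions for all $b$ and the fixed $\beta$ is the delicate step; it is precisely the point where $\cpq\neq 1$ must be used, and I expect it to follow from the strict monotonicity and the $v\neq 1$ non-degeneracy properties of $Z_v$ established in \cite{w.13:_asymp_class_locc_conver_its}.
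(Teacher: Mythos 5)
Your proposal follows essentially the same route as the paper: the same triangle-inequality reduction to $\min_m\max\{\text{forward error},\text{recovery error}\}$, the same expansion of the maximizer $m_n$ with the case split $\gamma>1/2$ versus $\gamma=1/2$, and the same appeal to the first-order estimates \eqref{ag}--\eqref{al} and the Rayleigh--Normal formula \eqref{Fpq}. The only divergence is at $\gamma=1/2$, where you treat the joint non-vanishing of the two limiting fidelities as the delicate heart of the argument; in fact this step is simpler than you anticipate, because the loss $\beta\sqrt{n}$ affects only the recovery leg, so the forward fidelity is still exactly \eqref{Fpq} and is already strictly less than $1$ for every finite $b$ when $\cpq\neq 1$ --- no second-order analysis of the recovery leg is needed there, and the recovery leg matters only in the $\gamma>1/2$, $b<0$ case, where the $\beta\sqrt{n}$ reduction of the target is negligible against the $n^{\gamma}$ deficit.
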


Combining Lemmas \ref{FirstLoss} and \ref{SecondLoss}, 
we obtain Theorem \ref{HT3}.
\begin{proof}
Similarly to (\ref{Bureslow}),
\begin{equation}
 \delta^{(2)}_{\infty}(\psi,\phi, \beta)\geq\lim_{n\rightarrow\infty}\min_{D:\text{LOCC}}B(\psi^{\otimes n-\beta\sqrt{n}},D(\phi^{\otimes m}))
\end{equation}
holds. Thus, the following holds the same as (\ref{MCRElow}).
\begin{align}
 &\delta^{(2)}_{\infty}(\psi,\phi, \beta)\nonumber\\
\geq& \min_{m\in\natn}\max\left\{\min_{C:\text{LOCC}}\right.B(C(\psi^{\otimes n}),\phi^{\otimes m}),\nonumber\\
&\hspace{90pt}\left.\min_{D:\text{LOCC}}B(\psi^{\otimes n-\beta\sqrt{n}},D(\phi^{\otimes m}))\right\}\nonumber\\
=& \left[ 1 - \max_{m\in\natn}\min\right.\left\{\max_{P_\psi^n\prec P'}\right.F(P',P_\phi^{m}),\nonumber\\
&\hspace{90pt}\left.\left.\max_{P_\phi^{m}\prec P''}F(P'',P_\psi^{n-\beta\sqrt{n}})\right\}\right]^{\frac{1}{2}}.\nonumber
\end{align}
Then, 
\begin{equation}
 \lim_{n\rightarrow\infty}\max_{P_\phi^{an + o(n)}\prec P''}F(P'',P_\psi^{n-\beta\sqrt{n}})=0\qquad\left(a < \frac{S_{\psi}}{S_{\phi}}\right) \label{alSecond}
\end{equation}
holds from the first order asymptotics similarly to (\ref{al}). Thus,
(\ref{ag}), (\ref{alSecond}), and (\ref{Fpq})
imply that
\begin{align*}
& \lim_{n\rightarrow\infty}\max_{m\in\natn}\min\left\{\max_{P_\psi^n\prec P'}\right.F(P',P_\phi^{m}),
\nonumber\\&\hspace{90pt}
\left.\max_{P_\phi^{m}\prec P''}F(P'',P_\psi^{n-\beta\sqrt{n}})\right\} 
 < 1
\end{align*}
when $\cpq \neq 1$, similarly to the proof of (\ref{fidmax}). Therefore, the lemma is proven.
\end{proof}

\section{Detailed Discussion of Theorem \ref{assist}}\label{Proof_assist}
The purpose of this appendix is deriving upper and lower bounds for $K$ of Theorem \ref{assist}.
For this purpose, 
we prepare some notations.
Let $h$ and $v$ denote the binary entropy $h(p)=-p\log p-(1-p)\log(1-p)$, and $v(p):=p(\log p)^2+(1-p)(\log (1-p))^2-h(p)^2$, respectively.
Then, we define $C:=\max_{0\leq p\leq 1/2}v(p)$ and $x_r:=g^{-1}(v(r)/h(r)+2)$, where $g(x):=(1-2x)(\log (1-x)-\log x)$.
Note that $g$ is a strictly monotonically decreasing function on $(0,1/2]$, thus $g^{-1}$ is well defined on $\text{ran} g=[0,\infty)$.
Then, we define $K_0(r)$, $K_1(r)$, and $K_2(r)$ as follows.
  \begin{align}
   &K_0(r)\nonumber\\
   :=&\frac{(h(r)-\log 2)v(x_r)-v(r)h(x_r)}{2\log 2v(r)}\nonumber\\
   &\hspace{-13pt}+\hspace{-3pt}\frac{\sqrt{[(\log 2\hspace{-1pt}-\hspace{-1pt}h(r))v(x_r)\hspace{-2pt}+\hspace{-2pt}v(r)h(x_r)]^2\hspace{-2pt}+\hspace{-2pt}4(\log 2)^2v(x_r)v(r)}}{2\log 2v(r)},\label{Knec}\\
  &K_1(r)\nonumber\\
  :=&\frac{Ch(x_r)+v(r)\log 2}{2(h(r)v(x_r)-v(r)h(x_r))}\nonumber\\
  &\hspace{-10pt}+\hspace{-2pt}\frac{\sqrt{\hspace{-1pt}(Ch(x_r)\hspace{-2pt}+\hspace{-2pt}v(r)\log 2)^2\hspace{-2pt}+\hspace{-2pt}4C\hspace{-1.5pt}\log 2(h(r)v(x_r)\hspace{-2pt}-\hspace{-2pt}v(r)h(x_r))}}{2(h(r)v(x_r)-v(r)h(x_r))},
 \end{align}
and
 \begin{align}
  K_2(r):=\frac{Ch(r)+\sqrt{C^2h(r)^2+4(\log 2)^2Cv(r)}}{2v(r)\log 2}.
 \end{align}

Theorem \ref{assist} can be shown from the following proposition, which 
gives a concrete estimation of $K$ of Theorem \ref{assist}.
\begin{Prop}\label{Ksuf}
Let $\hH_A=\hH_B=\comp^2$.
When  
a non-maximally pure entangled state $\omega$ of $\hH_A\otimes\hH_B$ 
has squared Schmidt coefficients $(r,1-r)$ with $0<r<1/2$
and $k\geq K_s(r):=\max\{K_1(r),K_2(r)\}$, 
for arbitrary two pure entangled states $\psi$ and $\phi$ of $\hH_A\otimes\hH_B$,
there exists a pure state $\omega'$ of $\hH_A\otimes\hH_B$ 
such that (\ref{supplement0}) holds.

 \end{Prop}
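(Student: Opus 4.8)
The plan is to turn the single identity $C_{\psi\otimes\omega^{\otimes k},\,\phi\otimes\omega'^{\otimes k}}=1$ into a one-dimensional intermediate-value problem. First I would use that the entropy $S$ and the variance $V$ of the squared-Schmidt distribution are both additive under tensor products, so that $S_{\psi\otimes\omega^{\otimes k}}=S_\psi+k\,h(r)$ and $V_{\psi\otimes\omega^{\otimes k}}=V_\psi+k\,v(r)$, and, writing the squared Schmidt coefficients of $\omega'$ as $(r',1-r')$, $S_{\phi\otimes\omega'^{\otimes k}}=S_\phi+k\,h(r')$ and $V_{\phi\otimes\omega'^{\otimes k}}=V_\phi+k\,v(r')$. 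Since $C_{P,Q}=1$ is equivalent to $S_P/V_P=S_Q/V_Q$, the condition (\ref{supplement0}) reduces to finding $r'\in[0,1/2]$ with $G(r')=\mu$, where
\[
G(r'):=\frac{S_\phi+k\,h(r')}{V_\phi+k\,v(r')},\qquad \mu:=\frac{S_\psi+k\,h(r)}{V_\psi+k\,v(r)}.
\]

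Next I would solve $G(r')=\mu$ by the intermediate value theorem, exploiting that $G$ is continuous. Using the universal bounds $0<S_\psi\le\log2$ and $0\le V_\psi\le C$ that hold for every state on $\comp^2\otimes\comp^2$ (whose Schmidt rank is at most two, so that $S_\psi=h(p)$, $V_\psi=v(p)\le C=\max_{0\le p\le1/2}v(p)$), the target value is trapped as $L\le\mu\le\mu_{\max}$ with $L:=k\,h(r)/(C+k\,v(r))$ and $\mu_{\max}:=(\log2+k\,h(r))/(k\,v(r))$; the former uses $S_\psi\ge0,\ V_\psi\le C$ and the latter is attained at maximally entangled $\psi$. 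It therefore suffices to exhibit one argument at which $G\ge\mu_{\max}$ and one at which $G\le L$, for then continuity forces some intermediate $r'$ with $G(r')=\mu$.

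For the upper end I would take $\omega'$ maximally entangled, $r'=1/2$, which gives $G(1/2)=(S_\phi+k\log2)/V_\phi\ge k\log2/C$ for every $\phi$ (using $S_\phi\ge0,\ V_\phi\le C$). Demanding $k\log2/C\ge\mu_{\max}$ and clearing denominators yields exactly $v(r)\log2\,k^2-C\,h(r)\,k-C\log2\ge0$, whose positive root is $K_2(r)$; hence $k\ge K_2(r)$ guarantees $G(1/2)\ge\mu_{\max}$ uniformly in $\phi$. For the lower end I would evaluate $G$ at the auxiliary point $r'=x_r$ and use $S_\phi\le\log2,\ V_\phi\ge0$ to bound $G(x_r)\le(\log2+k\,h(x_r))/(k\,v(x_r))$, the worst case being maximally entangled $\phi$. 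Requiring this to be $\le L$ and clearing denominators gives the quadratic $(h(r)v(x_r)-v(r)h(x_r))\,k^2-(C\,h(x_r)+v(r)\log2)\,k-C\log2\ge0$, whose positive root is $K_1(r)$. Thus $k\ge K_s(r)=\max\{K_1(r),K_2(r)\}$ secures both ends, and the intermediate value theorem produces an $r'\in[x_r,1/2]$ with $G(r')=\mu$, defining the desired $\omega'$.

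I expect the lower-end step to be the main obstacle. It requires identifying the right evaluation point $x_r$, which I would obtain by minimizing over $r'$ the threshold value of $k$ implied by the inequality $G(r')\le L$; this optimization is what produces the condition $g(x_r)=v(r)/h(r)+2$, i.e. $x_r=g^{-1}(v(r)/h(r)+2)$. One must also check the positivity $h(r)v(x_r)-v(r)h(x_r)>0$, equivalently $x_r<r$, so that the governing quadratic opens upward and $K_1(r)$ is its genuine threshold; this rests on the monotonicity of $p\mapsto h(p)/v(p)$, which increases to $+\infty$ as $p\to1/2$. By contrast, the upper-end bound and the final intermediate-value argument are routine, the only subtlety being the degenerate case $V_\phi=0$ (maximally entangled $\phi$), where $G(r')\to\infty$ as $r'\to1/2^-$ and the upper bound only becomes easier.
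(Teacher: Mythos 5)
Your proposal follows essentially the same route as the paper's proof: reduce (\ref{supplement0}) via additivity of the entropy $S$ and the variance $V$ under tensor products to a one-variable equation for the smaller squared Schmidt coefficient of $\omega'$, and apply the intermediate value theorem at the same two evaluation points $x_r$ and $1/2$, arriving at exactly the two quadratics in $k$ whose larger roots are $K_1(r)$ and $K_2(r)$. The only structural difference is minor: you decouple the dependence on $\psi$ and $\phi$ up front with the universal bounds $0\le S\le\log 2$ and $0\le V\le C$, whereas the paper first locates the worst-case $(p,q)$ by first-derivative tests (yielding the points $P_k$, $Q_k$) and then relaxes to the same crude bounds anyway; your version is, if anything, more direct, and your treatment of the degenerate case $V_\phi=0$ at $r'=1/2$ is correct.

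The one substantive gap is the positivity $h(r)v(x_r)-v(r)h(x_r)>0$, which you correctly identify as necessary for the $K_1$-quadratic to open upward but do not prove. This is not routine calculus: it requires both that $D(y):=v(y)/h(y)$ is strictly decreasing on $(0,1/2)$ and that $x_r<r$, i.e. $g(x_r)=v(r)/h(r)+2>g(r)$. Both facts reduce to the inequality $v(y)+2h(y)-h(y)g(y)>0$, equivalently $h(y)\ge \log y\log(1-y)$, which the paper establishes in Lemma \ref{DG} by a term-by-term Taylor-series comparison and then uses in Lemma \ref{Dmono}. Your appeal to ``$h/v$ increases to $+\infty$ as $p\to 1/2$'' gives the correct boundary behaviour but neither the monotonicity nor the comparison $x_r<r$, so this step must be filled in before the argument is complete.
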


Once we prove Proposition \ref{Ksuf}, 
the proof of Theorem \ref{assist} is completed
because it explicitly gives a ``sufficiently large number $K$'' of Theorem \ref{assist}. 
 To prove Proposition \ref{Ksuf}, we should investigate some properties of functions we use as follows.
\begin{Lem}\label{DG}
 \begin{align}
  \frac{v(y)}{h(y)}+2>g(y)
 \end{align}
 holds for $0<y< 1/2$.
\end{Lem}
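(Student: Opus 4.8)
The plan is to clear the positive denominator and reduce the statement to a single-variable sign question. Writing $p=y$ and $q=1-y$ with $0<p<1/2$, and setting $L:=\log\frac{q}{p}>0$, I would first record the elementary closed forms obtained directly from the definitions of $h,v,g$,
\begin{equation}
v(p)=p q L^2,\qquad g(p)=(q-p)L,\qquad h'(p)=L. \nonumber
\end{equation}
Since $h(p)>0$, the inequality $\frac{v}{h}+2>g$ is equivalent to $N(p):=v(p)+\big(2-g(p)\big)h(p)>0$. The first key step is a purely algebraic simplification: expanding $v-gh$ and using $p+q=1$ collapses the $L^2$-terms and yields the identity
\begin{equation}
N(p)=L\,M(p)+2h(p),\qquad M(p):=q^2\log q-p^2\log p. \nonumber
\end{equation}

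The heart of the argument is to show $M(p)<0$ on $(0,1/2)$, i.e. $D(p):=-M(p)=p^2\log p-q^2\log q>0$, and I would do this by a unimodality trick rather than by direct estimation. Differentiating and using $p\log p+q\log q=-h$ gives the clean formula $D'(p)=1-2h(p)$. Because $h$ is strictly increasing on $(0,1/2)$ from $0$ to $\log 2$ and $2\log 2>1$, there is a unique $p^\ast$ with $h(p^\ast)=1/2$; hence $D'>0$ on $(0,p^\ast)$ and $D'<0$ on $(p^\ast,1/2)$, so $D$ first rises then falls. Since $D(0^+)=D(1/2)=0$, this forces $D>0$ on the open interval, i.e. $M<0$.

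With $M<0$ in hand, positivity of $N$ follows by monotonicity. Using $L'=-\frac{1}{pq}$, $M'=2h-1$ and $h'=L$, I would compute
\begin{equation}
N'(p)=-\frac{M(p)}{pq}+L\,(2h+1)=\frac{D(p)}{pq}+L\,(2h+1)>0, \nonumber
\end{equation}
both summands being positive. Combined with the boundary behaviour $\lim_{p\to0^+}N(p)=0$ (which holds because $|M|=O(p)$ while $L=O(\log\frac1p)$, so $LM\to0$, and $h\to0$), strict monotonicity gives $N(p)>0$ throughout $(0,1/2)$. Dividing by $h(p)>0$ then yields the lemma.

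I expect the main obstacle to be the inequality $M(p)<0$. A frontal comparison of $q^2\log q$ with $p^2\log p$ is awkward, since the relevant map $t\mapsto -t^2\log t$ is not monotone on $(0,1)$ (it peaks at $t=e^{-1/2}$), so no naive monotonicity comparison applies. The device that makes it tractable is the identity $D'=1-2h$, which turns the claim into a single sign-change analysis anchored at the two vanishing endpoints. A minor point worth flagging is that the argument genuinely uses $2\log2>1$, so that $h$ actually attains the level $1/2$; this is exactly where the additive constant in the statement is tuned to the entropy normalization.
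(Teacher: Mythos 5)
Your proof is correct. Both your argument and the paper's hinge, after clearing the denominator $h(y)>0$, on the same key sub-inequality: in your notation $D(p)=p^2\log p-q^2\log q>0$ on $(0,1/2)$, which is exactly the numerator of the derivative the paper studies. The difference lies in how the two proofs are organized around it. The paper first simplifies $v+2h-gh$ to $h^2+2h-\log y\log(1-y)$, discards the manifestly positive $h^2+h$, and reduces to showing $L(y):=h(y)-\log y\log(1-y)\ge 0$ by proving $L$ increases from $L(0^+)=0$; the positivity of $L'$, i.e.\ of $D$, is then established by a Taylor-series expansion of the logarithms together with a term-by-term comparison and telescoping. You instead keep the full expression $N=LM+2h$ (your identity $v-gh=LM$ is a clean equivalent of the paper's $h^2-\log y\log(1-y)$) and prove $N$ increases from $N(0^+)=0$ via $N'=D/(pq)+L(2h+1)$; crucially, you prove $D>0$ not by series but by the identity $D'=1-2h$, which makes $D$ unimodal with $D(0^+)=D(1/2)=0$. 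Your route buys a shorter, more conceptual proof of the inner inequality (no series manipulation, just a sign change of $1-2h$ anchored at two vanishing endpoints), at the cost of one extra limit computation ($LM\to 0$ as $p\to 0^+$) that the paper's reduction to $L(y)$ avoids. Both are complete and rigorous.
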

 \begin{proof}
 In the following, let $y$ satisfy $0<y<1/2$. 
 From $v(y)+2h(y)-h(y)g(y)=h(y)^2+2h(y)-\log y\log (1-y)$ and $h(y)>0$, it is enough to verify $L(y):=h(y)-\log y\log (1-y)\geq 0$.
 Because $\lim_{y\rightarrow 0}L(y)=0$, it is sufficient to show that $L(y)$ is monotonically increasing.
 Since $(1-y)^2y^n-y^2(1-y)^n\leq 0$ holds for $n\geq 2$, we have the following by using the Taylor expansion.
 \begin{align}
  &-(1-y)^2\log (1-y) + y^2\log y\nonumber\\
  =&\sum_{n=1}^{\infty}\frac{1}{n}[(1-y)^2y^n-y^2(1-y)^n]\nonumber\\
  =&(1-y)^2y-y^2(1-y)+\sum_{n=2}^{\infty}\frac{1}{n}[(1-y)^2y^n-y^2(1-y)^n]\nonumber\\
  \geq &(1-y)^2y-y^2(1-y)+\sum_{n=2}^{\infty}[(1-y)^2y^n-y^2(1-y)^n]\nonumber\\
  =&(1-y)^2y-y^2(1-y)+(1-y)y^2-y(1-y)^2\nonumber\\
  =&0.
 \end{align}
 Thus, we obtain that
 \begin{align}
  L'(y)=\frac{-(1-y)^2\log (1-y) + y^2\log y}{y(1-y)}\geq 0,
 \end{align}
  and the proof is completed.
 \end{proof}
 \begin{Lem}\label{Dmono}
  $D(y):=v(y)/h(y)$ is a strictly monotonically decreasing function on $0<y<1/2$.
 \end{Lem}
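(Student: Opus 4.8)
The plan is to prove $D'(y)<0$ throughout $(0,1/2)$ by determining the sign of the numerator of $D'=(v'h-vh')/h^2$. Since replacing the base of the logarithm multiplies $D$ by a positive constant, the monotonicity is unaffected and I may compute with the natural logarithm; I abbreviate $a:=\ln y$ and $b:=\ln(1-y)$, so that $h=-ya-(1-y)b$ and $v=ya^2+(1-y)b^2-h^2$. First I would differentiate: using $a'=1/y$, $b'=-1/(1-y)$ and $h'=b-a$, a short computation yields the factorization
\begin{equation}
v'(y)=(a-b)\,(a+b+2+2h),
\end{equation}
and hence
\begin{equation}
v'h-vh'=(a-b)\big[(a+b+2+2h)\,h+v\big]=:(a-b)\,\Psi(y).
\end{equation}
Because $a-b=\ln\frac{y}{1-y}<0$ on $(0,1/2)$, the whole problem reduces to showing $\Psi(y)>0$.

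Expanding the bracket and again substituting $h=-ya-(1-y)b$, I would simplify $\Psi$ to $\Psi=T+2h$ with $T=(a-b)\big(y^2a-(1-y)^2b\big)$. Writing $M:=y^2a-(1-y)^2b$, the case $M\le 0$ is immediate, since then $(a-b)M\ge 0$ and $\Psi\ge 2h>0$; so the \emph{main obstacle} is the case $M>0$, where the positive term $2h$ must dominate the negative contribution $(a-b)M$, i.e.\ one must establish $2h>(b-a)M$. The difficulty is precisely that $T$ genuinely changes sign, so the inequality cannot be read off term by term.

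To make this manageable I would bound $M$ from above: dropping the nonpositive term $y^2a\le 0$ and using $-\ln(1-y)\le y/(1-y)$ gives $M\le(1-y)^2\cdot\frac{y}{1-y}=y(1-y)$. Since $b-a>0$, it then suffices to establish the cleaner inequality
\begin{equation}
2h(y)>y(1-y)\,\ln\frac{1-y}{y}\qquad(0<y<1/2).
\end{equation}
This last step I expect to be routine: collecting the coefficients of $\ln y$ and $\ln(1-y)$ in $F(y):=2h(y)-y(1-y)\ln\frac{1-y}{y}$ produces the identity
\begin{equation}
F(y)=-y(1+y)\,\ln y-(1-y)(2+y)\,\ln(1-y),
\end{equation}
whose two summands are each strictly positive on $(0,1/2)$ because $\ln y<0$ and $\ln(1-y)<0$ there. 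Thus $F>0$, hence $\Psi>0$, hence $v'h-vh'<0$, and $D$ is strictly decreasing. The only delicate point is the middle reduction: recognizing that $T$ changes sign and isolating the slack bound $M\le y(1-y)$ that collapses $\Psi>0$ into the elementary positivity of $F$; the differentiation producing $(1)$ and the final sign-check of $(4)$ are mechanical.
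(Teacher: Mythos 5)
Your proof is correct: I verified the differentiation $v'=(a-b)(a+b+2+2h)$ (with $a=\ln y$, $b=\ln(1-y)$, $h'=b-a$), the identity $\Psi-2h=(a-b)\big(y^2a-(1-y)^2b\big)$, the bound $M\le y(1-y)$ obtained from $y^2a\le 0$ and $-\ln(1-y)\le y/(1-y)$, and the final sign decomposition $F(y)=-y(1+y)\ln y-(1-y)(2+y)\ln(1-y)>0$; every step holds on $(0,1/2)$. Your route coincides with the paper's up to the factorization of the numerator of $D'$: the paper writes $v'h-vh'=(a-b)\,[v+2h-hg]$, and since $a+b+2h=(1-2y)(a-b)=-g(y)$, your $\Psi$ is exactly the paper's bracket $v+2h-hg$. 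The genuine divergence is in how positivity of that bracket is established. The paper reduces it (via Lemma~\ref{DG}) to $L(y):=h(y)-\log y\log(1-y)\ge 0$, proved by checking $L(0^+)=0$ and $L'\ge 0$ through a term-by-term Taylor-series comparison; that auxiliary inequality $v/h+2>g$ is then reused later (to get $x_r<r$ in the proof of Proposition~\ref{Ksuf}). You instead split off the positive term $2h$, absorb the sign-changing remainder with the crude bound $M\le y(1-y)$, and finish with an explicit positive decomposition. Your argument is self-contained and avoids the series manipulation, at the cost of not producing the inequality $v/h+2>g$ that the paper needs again elsewhere; either proof is acceptable for this lemma in isolation.
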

 \begin{proof}
  From Lemma \ref{DG}, we have
  \begin{align}
   &D'(y)\nonumber\\
  =&\frac{-(\log(1-y)-\log y)[v(y)+2h(y)-h(y)g(y)]}{h(y)^2}<0
  \end{align}
  for $0<y<1/2$, and the proof is completed.
 \end{proof}
 Now, let us prove Proposition \ref{Ksuf}.
 \begin{proof}[Proof of Proposition \ref{Ksuf}]
Thanks to Theorem \ref{noloss}, it is enough to show that for any $0<r<1/2$, the following holds when $k\geq K_s(r)$.
For any $0\leq p,q\leq 1/2$, the following equation about $x$ has a solution with $0\leq x\leq 1/2$, which corresponds to the smaller one of the squared Schmidt coefficients of the byproduct state $\omega'$ in the statement of the lemma.
\begin{align}
 \frac{v(q)+kv(x)}{h(q)+kh(x)}=\frac{v(p)+kv(r)}{h(p)+kh(r)},
\end{align}
or equivalently
\begin{align}
 &f_r(x,p,q,k)\nonumber\\
 :=& (v(p)+kv(r))(h(q)+kh(x))\nonumber\\
 &\hspace{30pt}-(h(p)+kh(r))(v(q)+kv(x))=0.\label{eqcpq}
\end{align}
To show the existence of the solution, by the intermediate-value theorem, it is enough to verify that there exist $x_1$ and $x_2$ such that $f_r(x_1,p,q,k)\leq 0$ and $f_r(x_2,p,q,k)\geq 0$ hold for any $0\leq p,q\leq 1/2$.
Then, we verify this with $x_1=x_r$ and $x_2=1/2$.
After all, it is enough to show that $k\geq K_1(r)$ is sufficient for that $N(p,q):=f_r(x_r,p,q,k)\leq 0$ holds for any $0\leq p,q\leq 1/2$, simultaneously that $k\geq K_2(r)$ is sufficient for that $M(p,q):=f_r(1/2,p,q,k)\leq 0$ holds for any $0\leq p,q\leq 1/2$.

At first, we show that $k\geq K_1(r)$ is sufficient to be $N(p,q)\leq 0$ for any $0\leq p,q\leq 1/2$, which is equivalent to $\max_{0\leq p,q\leq 1/2}N(p,q)\leq 0$.
The first derivative test shows that
$\arg\max_{0\leq q\leq 1/2}N(p,q)=1/2$ for any $0\leq p\leq 1/2$.
Thus, again by the first derivative test, $\max_{0\leq p,q\leq 1/2}N(p,q)=\max_{0\leq p\leq 1/2}N(p,1/2)=N(P_k,1/2)$, where $P_k:=g^{-1}(kv(x_r)/(kh(x_r)+\log 2)+2)$.
Now, we have the following estimation.
\begin{align}
 &N\left(P_k,\frac{1}{2}\right)\nonumber\\
 =&(v(r)h(x_r)-h(r)v(x_r))k^2-kv(x_r)h(P_k)\nonumber\\
 &+(v(P_k)h(x_r)+v(r)\log 2)k+v(P_k)\log 2\nonumber\\
 \leq &(v(r)h(x_r)-h(r)v(x_r))k^2\nonumber\\
 &+(Ch(x_r)+v(r)\log 2)k+C\log 2.\label{X1}
\end{align}
 From Lemma \ref{DG}, we have $g(x_r)=D(r)+2>g(r)$. Thus, since $g$ is strictly monotonically decreasing, $x_r<r$ holds. Then, from Lemma \ref{Dmono}, we obtain $D(x_r)>D(r)$, hence $(v(r)h(x_r)-h(r)v(x_r))<0$.
 Thus, from (\ref{X1}),
 $
  k\geq K_1(r)
 $
  is sufficient for $N(P_k,1/2)\leq 0$.
  
Next, we show that $k\geq K_2(r)$ is sufficient to be $M(p,q):=f_r(1/2,p,q,k)\geq 0$ for any $0\leq p,q\leq 1/2$, which is equivalent to $\min_{0\leq p,q\leq 1/2}M(p,q)\geq 0$.
The first derivative test shows that
$\arg\min_{0\leq p\leq 1/2}M(p,q)=1/2$ for any $0\leq q\leq 1/2$.
Thus, again by the first derivative test, $\min_{0\leq p,q\leq 1/2}M(p,q)=\min_{0\leq q\leq 1/2}M(1/2,q)=M(1/2,Q_k)$, where $Q_k:=g^{-1}(kv(r)/(kh(r)+\log 2)+2)$.
From the estimation
\begin{align}
 &M\left(\frac{1}{2},Q_k\right)\nonumber\\
 =&(\log 2)v(r)k^2-(h(r)k+\log 2) v(Q_k) \nonumber\\
 &\hspace{30pt}+ kv(Q_k)\log 2 +kv(r)h(Q_k)\nonumber\\
 \geq&(\log 2)v(r)k^2-Ch(r)k+C\log 2,
\end{align}
  we find that
 $
 k\geq K_2(r)\label{X}
 $
  is sufficient
 for $M(1/2,Q_k)\geq 0$.

  Therefore, if $k\geq K_s(r)=\max\{K_1(r),K_2(r)\}$, $\max_{0\leq p,q\leq 1/2}N(p,q)\leq 0$ and $\min_{0\leq p,q\leq 1/2}M(p,q)\geq 0$ are simultaneously satisfied, i.e., (\ref{eqcpq}) has a solution for any $0\leq p,q\leq 1/2$, hence the proof is completed.
 \end{proof}

 Furthermore, we give how large $k$ is at least needed for $\omega^{\otimes k}$ to be a universal resource for the asymptotic reversibility as summarized below in Proposition \ref{Kneclem}.
 \begin{Prop}\label{Kneclem}
When a non-maximally pure entangled state $\omega$ of $\hH_A\otimes\hH_B$ 
has squared Schmidt coefficients $(r,1-r)$ with $0<r<1/2$
similar to Proposition \ref{Ksuf} and $k$ is less than $K_0(r)$ defined in \eqref{Knec},
there exists a pair of pure entangled states $\psi$ and $\phi$ 
of $\hH_A\otimes\hH_B$ such that (\ref{supplement0}) does not hold with any pure state $\omega'$ of $\hH_A\otimes\hH_B$. 
 \end{Prop}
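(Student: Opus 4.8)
The plan is to mirror the proof of Proposition~\ref{Ksuf}, converting its bracketing (intermediate-value) argument into a non-existence argument. By Theorem~\ref{noloss}, the reversibility condition $C_{\psi\otimes\omega^{\otimes k},\phi\otimes\omega'^{\otimes k}}=1$ is, after passing to the squared Schmidt coefficients $(p,1-p)$, $(q,1-q)$, $(r,1-r)$, $(x,1-x)$ of $\psi,\phi,\omega,\omega'$, exactly the solvability in $x\in[0,1/2]$ of equation~(\ref{eqcpq}), i.e. $f_r(x,p,q,k)=0$. Hence to prove the proposition I must exhibit, for each $k<K_0(r)$, a single pair $(p,q)$ for which $f_r(\cdot,p,q,k)$ has \emph{no} zero on $[0,1/2]$; equivalently, writing the invariant ratio $V/S$, I must arrange that the target value $\frac{v(p)+kv(r)}{h(p)+kh(r)}$ falls strictly outside the range swept by $x\mapsto\frac{v(q)+kv(x)}{h(q)+kh(x)}$ as $x$ ranges over $[0,1/2]$. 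So the negation of the intermediate-value step amounts to showing $f_r$ keeps a single sign on the whole interval.

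First I would fix a worst-case pair $(p,q)$ by driving one state to the maximally entangled point, where the spread of achievable byproduct ratios is most constrained. Taking $\psi$ maximally entangled ($p=1/2$, so $v(p)=0$, $h(p)=\log 2$) collapses the target ratio to the clean $k$-dependent value $\frac{kv(r)}{\log 2+kh(r)}$, and choosing $\phi$ at the companion point $q=x_r$ reuses precisely the bracket location $x_r$ already governing Proposition~\ref{Ksuf}. With such a choice the relevant boundary of the byproduct range is an endpoint of $[0,1/2]$, so non-existence reduces to a scalar comparison of $\frac{kv(r)}{\log 2+kh(r)}$ with the byproduct ratio at that endpoint; clearing denominators yields a quadratic inequality in $k$ with negative constant term. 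Optimizing the remaining free state and reading off the unique positive root of that quadratic is what produces the bound $K_0(r)$ of~(\ref{Knec}).

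The structural facts I would invoke are already available. Lemma~\ref{DG} together with the strict monotonicity of $g$ gives $x_r<r$, and Lemma~\ref{Dmono} then gives $D(x_r)>D(r)$, i.e. $v(r)h(x_r)-h(r)v(x_r)<0$; this sign is what guarantees the quadratic in $k$ genuinely changes sign at a single positive root. I would then run the same first-derivative test as in the proof of Proposition~\ref{Ksuf} to locate the extremum of $x\mapsto\frac{v(q)+kv(x)}{h(q)+kh(x)}$ on $[0,1/2]$, showing that its relevant boundary value is attained at an endpoint, so that beating that value certifies $f_r$ does not vanish anywhere on the interval and hence that (\ref{supplement0}) fails for every pure $\omega'$.

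The main obstacle is exactly this last point. Unlike Proposition~\ref{Ksuf}, where the intermediate-value theorem only requires the two bracket values $f_r(x_r,\cdot)$ and $f_r(1/2,\cdot)$ to straddle zero, a non-existence claim must control $f_r$ on the \emph{entire} interval $[0,1/2]$, so I must prove the relevant extremum of the byproduct ratio is the endpoint and rule out interior crossings. This is where unimodality of $x\mapsto\frac{v(q)+kv(x)}{h(q)+kh(x)}$, obtained from the monotonicity of $v'/h'$ and the estimates of Lemmas~\ref{DG} and~\ref{Dmono}, is needed, and it is the delicate step; the subsequent reduction to a quadratic and the extraction of its positive root as $K_0(r)$ is then routine algebra. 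I would also stress that the resulting $K_0(r)$ is only a \emph{necessary} size and need not be the exact threshold, since the convenient choices (one state maximally entangled, companion point $x_r$, endpoint boundary) are not guaranteed to be simultaneously optimal over all pairs $(p,q)$.
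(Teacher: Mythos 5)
Your proposal follows essentially the same route as the paper: reduce via Theorem \ref{noloss} to non-solvability of (\ref{eqcpq}), pick $\psi$ maximally entangled ($p=1/2$) and $\phi$ with squared Schmidt coefficient $x_r$, use a first-derivative test to show $\max_{0\le x\le 1/2}f_r(x,p,q,k)=f_r(1/2,p,q,k)=M(1/2,x_r)$, and read $K_0(r)$ off as the positive root of the resulting quadratic in $k$ (positive leading coefficient, negative constant term). This is exactly the paper's argument, so the proposal is correct and matches.
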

This proposition means that 
when the number $k$ is smaller than $K_0(r)$,
there exists a pair of pure entangled states $\psi$ and $\phi$ 
such that the perfect recovery of the conversion $\psi \to \phi$ 
is impossible even asymptotically 
even though we attach the state $\omega^{\otimes k}$.
That is, in order to realize universal supplemental resource from $\omega$,
we need to use $K_0(r)$ tensor product state of $\omega$ at least. 

 \begin{proof}
  Similarly to the proof of Proposition \ref{Ksuf}, we focus on the existence of a solution of (\ref{eqcpq}).
  To prove this proposition, it is sufficient to find a pair of $0\leq p,q\leq 1/2$ such that (\ref{eqcpq}) has no solution when $k<K_0(r)$, which is verified if $\max_{0\leq x\leq 1/2}f_r(x,p,q,k)<0$.
  The first derivative test shows that $\max_{0\leq x\leq 1/2}f_r(x,p,q,k)=f_r(1/2,p,q,k)=M(p,q)$ holds for any $0\leq p,q\leq 1/2$.
  Thus, we show that $M(1/2,x_r)<0$ when $k<K_0(r)$.
  Indeed, since
  \begin{align}
   &M\left(\frac{1}{2},x_r\right)\nonumber\\
   =&\log 2 v(r)k^2\nonumber\\
   &+[(\log 2 -h(r))v(x_r)+v(r)h(x_r)]k-v(x_r)\log 2,
  \end{align}
  $M(1/2,x_r)<0$ holds when $k<K_0(r)$, and the proof is completed.
 \end{proof}

\bibliographystyle{apsrev4-1}
\bibliography{qi}

\end{document}